 \tikzstyle{Gcentre}=[rectangle,draw,fill=black!90,minimum size=4.5pt, inner sep=0pt]
 \tikzstyle{Filtered}=[rectangle,draw,fill=black!20,minimum size=4.5pt, inner sep=0pt]
 \tikzstyle{Lcentre}=[circle,draw=black,fill=black!90,minimum size=4.5pt,inner sep=0pt]
 \tikzstyle{LFiltered}=[circle,draw=black,fill=black!20,minimum size=4.5pt,inner sep=0pt]
 \tikzstyle{Point}=[circle,draw=black,fill=blue!90,minimum size=4pt,inner sep=0pt]
 \tikzstyle{Lcentre-tiny}=[circle,draw=black,fill=black!90,minimum size=1pt,inner sep=0pt]
\newtheorem{theorem}{Theorem} 
\newtheorem{lemma}{Lemma} 
\newtheorem{claim}{Claim}
\newtheorem{definition}{Definition} 
\newtheorem{remarka}{Remark}
\newenvironment{proof}{{\bf Proof.}}{\hfill\rule{2mm}{2mm}} 
\newenvironment{pproof}[1]{\noindent{\textbf{Proof of #1.}}}{\hfill\rule{2mm}{2mm}} 
\newcommand{\RR}{\mathbb{R}}
\newcommand{\cl}{{\mathcal C}\xspace}
\newcommand{\cost}{{\rm cost}\xspace}
\newcommand{\cX}{{\mathcal X}\xspace}
\newcommand{\fa}{{\mathcal C}\xspace}
\newcommand{\opt}{{\mathcal O}\xspace}
\newcommand{\loc}{{\mathcal S}\xspace}
\newcommand{\cents}{{\mathcal T}\xspace}
\newcommand{\net}{{\mathcal N}\xspace}
\newcommand{\eps}{{\epsilon}}
\newcommand{\calS}{{\cal S}}
\newcommand{\del}{\delta}
\newcommand{\kmeans}{{$k$-\textsc{means}}\xspace}
\newcommand{\kmed}{{$k$-\textsc{median}}\xspace}
\newcommand{\uufl}{{\textsc{uniform-cost UFL}}\xspace}
\newcommand{\ufl}{{\textsc{uncapacitated facility location}}\xspace}
\newcommand{\kmeanso}{{$k$-\textsc{means}-\textsc{out}}\xspace}
\newcommand{\kmedo}{{$k$-\textsc{median}-\textsc{out}}\xspace}
\newcommand{\uflo}{{\textsc{UFL}-\textsc{out}}\xspace}
\newcommand{\uuflo}{{\textsc{Uniform-UFL}-\textsc{out}}\xspace}
\newcommand{\cP}{{\mathcal P}\xspace}
\newcommand{\cG}{{\mathcal G}\xspace}
\title{Approximation Schemes for Clustering with Outliers}
\author{Zachary Friggstad\thanks{This research was undertaken, in part, thanks to funding from the Canada Research Chairs program and an NSERC Discovery Grant.}
\quad Kamyar Khodamoradi\\
\quad Mohsen Rezapour
\quad Mohammad R. Salavatipour\thanks{Supported by NSERC.}\\
\vspace{1cm}
Department of Computing Science\\
University of Alberta}
\date{}
\begin{document}

\maketitle
\thispagestyle{empty}
\begin{abstract}
Clustering problems are well-studied in a variety of fields such as data science, operations research, and computer science. 
Such problems include variants of centre location problems, $k$-median, and $k$-means to name a few.
In some cases, not all data points need to be clustered; some may be discarded for various reasons. 
For instance, some points may arise from noise in a data set or one might be willing to discard a certain fraction of 
the points to avoid incurring unnecessary overhead in the cost of a clustering solution.

We study clustering problems with outliers. More specifically, we look at \ufl (UFL), \kmed, and \kmeans.
In these problems, we are given a set $\cX$ of data points in a metric space $\delta(.,.)$, 
a set $\fa$ of possible centres (each maybe with an opening
cost), maybe an integer parameter $k$, plus an additional parameter $z$ as the number of outliers.
In \ufl with outliers, we have to open some centres, discard up to $z$ points of $\cX$ and assign every other point to the nearest
open centre, minimizing the total assignment cost plus centre opening costs. In \kmed and \kmeans, 
we have to open up to $k$ centres but there are no opening costs. In \kmeans, the cost of assigning $j$ to $i$ is $\delta^2(j,i)$.
We present several results. Our main focus is on cases 
where $\delta$ is a doubling metric (this includes fixed dimensional Euclidean metrics as a special case)
or is the shortest path metrics of graphs from a minor-closed family of graphs.
For \uufl with outliers on such metrics we show that a multiswap simple  local search heuristic yields a PTAS.
With a bit more work, we extend this to bicriteria approximations for the \kmed and \kmeans problems in the same metrics 
where, for any constant $\epsilon > 0$, we can find a solution using $(1+\epsilon)k$ centres whose cost is at most a 
$(1+\epsilon)$-factor of the optimum and uses at most $z$ outliers.
Our algorithms are all based on natural multiswap local search heuristics. 
We also show that natural local search heuristics that do not violate the number of clusters and outliers for \kmed (or \kmeans)
will have unbounded gap even in Euclidean metrics.

Furthermore, we show how our analysis can be extended to general metrics for \kmeans 
with outliers to obtain a $(25+\epsilon,1+\epsilon)$
bicriteria: an algorithm that uses at most $(1+\epsilon)k$ clusters and whose cost is at most $25+\epsilon$ of optimum and uses
no more than $z$ outliers.

\end{abstract}
\newpage
\setcounter{page}{1}
\section{Introduction}

Clustering is a fundamental problem in the field of data analysis with a long history and a wide range of applications in very
 different areas, including data mining \cite{ber06}, image processing \cite{PT92}, biology \cite{Jain10}, and database systems 
\cite{EKSX96}. Clustering is the task of  partitioning a given set of data points into clusters based on a specified
similarity measure between the data points such that the points within the same cluster are more similar to each other than those 
in different clusters. 

In a typical clustering problem, we are given
a set of $n$ data points in a metric space, and an integer $k$ which specifies the desired number of clusters. We wish to find a 
set of $k$ points to act as {\em centres} and then assign each point to its nearest centre, thereby forming $k$ clusters. The 
quality of the clustering solution can be measured by using different objectives. For example, in the \kmeans clustering (which 
is the most widely used clustering model), the goal (objective function) is to minimize the sum of squared distances of each data 
point to its 
centre, while in \kmed, the goal is to minimize the sum of distances of each data point to its 
centre. The \ufl problem is the same as \kmed except that instead of a cardinality constraint bounding the number of centres, there 
is an additional cost for each centre included in the solution.
Minimizing these objective functions exactly is NP-hard \cite{ADHP09,DFKVV04,GK99,JMS02,MNV09,Vattani09}, so there has been
 substantial work on obtaining provable upper
 bounds (approximability) and  lower bounds (inapproximability)  for these objectives; see \cite{ANSW16,BPRS14,GK99,JMS02,LSW17,L11}
 for the currently best bounds.
Although inapproximability results \cite{GK99,JMS02,LSW17} prevent getting polynomial time approximation schemes (PTASs) for these
 problems in general metrics, PTASs are known for these
 problems in fixed dimensional Euclidean metrics \cite{ARR98,AKM16B,FRS16B}.
Indeed, PTASs for \kmed and \ufl in fixed dimension Euclidean space \cite{ARR98} have been known for almost two
decades, but getting a PTAS for \kmeans in fixed dimension Euclidean space had been an open problem until recent results
 of \cite{AKM16B,FRS16B}.

In spite of the fact that these popular (centre based) clustering models are reasonably good for noise-free data sets, their 
objective functions (specially the \kmeans objective function) are
 extremely sensitive to the existence of points far from cluster centres. Therefore, a small number of very distant data points,
 called {\em outliers}, --if not discarded-- can
  dramatically affect the clustering cost and also the quality of the final clustering solution.
Dealing with such outliers is indeed the main focus of this paper. Clustering with outliers has a natural motivation in many 
applications of clustering and
centre location problems. For example, consider (nongovernmental) companies that provide nationwide services (e.g., mobile phone
 companies, chain stores). They alway have to disregard some percentage of the remote population in order to be able to provide a
 competitive service to the majority of the population.

We restrict our attention to the outlier version of the three well studied clustering problems: \kmeans with outliers 
(\kmeanso), \kmed with outliers (\kmedo), and \ufl with outliers (\uflo).
Formally, in these problems, we are given a set $\cX$ of $n$ data points in a metric space, a set $\fa$ of possible centres, and 
the number of desired outliers $z$.
Both \kmeanso and \kmedo aim at finding $k$ centres $C=\{c_1,\ldots,c_k \} \subseteq \fa$ and a set of (up to) $z$ points $Z$ 
to act as outliers.
The objective is to minimize the clustering cost. In \kmeanso, this is the sum of squared distances of each data point in
 $\cX \setminus Z$ to its nearest 
centre, i.e., $\sum_{x\in \cX \setminus Z}\delta(x,C)^2$, while in \kmedo this is just the sum of distances, i.e.,
 $\sum_{x\in \cX \setminus Z}\delta(x,C)$, where 
$\delta(x,C)$ indicates the distance of point $x$ to its nearest centre in $C$. \uflo is the same as \kmedo except that instead of a
cardinality constraint, we are given opening cost $f_c$ for each centre $c \in \fa$. The problem hence consists of finding 
centres (facilities) $C$ and $z$ outliers $Z$ that minimizes $\sum_{x\in \cX \setminus Z}\delta(x,C) + \sum_{c \in C} f_c$.
We present PTASs for these problems on doubling metrics i.e. metrics with fixed doubling dimensions  (which include
fixed dimension Euclidean metrics as special case) and shortest path metrics of
minor closed graphs (which includes planar graphs as special case)\footnote{For brevity, we will call such graphs {\em minor closed}, understanding this means they belong to a fixed family of graphs that closed under minors.}.
Recall that a metric $(V,\delta)$ has doubling dimension $d$ 
if each ball of radius $2r$ can be covered with $2^d$ balls of radius $r$ in $V$. We call it a {\em doubling metric} if $d$ can be regarded as a constant; Euclidean metrics of constant (Euclidean) dimension are doubling metrics.

Despite a very large amount of work on the clustering problems, there has been only little work on their outlier versions.
To the best of our knowledge, the clustering problem with outliers was introduced by Charikar et al. \cite{CKMN01}.
They devised a factor 3-approximation for \uflo, and also a bicriteria $4(1+\frac{1}{\epsilon})$-approximation algorithm 
for \kmedo that drops $z(1+ \epsilon)$ outliers. 
They obtained these results via some modifications of the Jain-Vazirani algorithm \cite{JV01}.
The first true approximation algorithm for \kmedo was given by Chen \cite{Chen08} who obtained this by combining very 
carefully the Jain-Vazirani algorithm and local search; the approximation factor is not specified but seams to be a very large constant.  
Very recently, the first  bicriteria approximation
 algorithm for \kmeanso is obtained by Gupta et al. \cite{GKLMV17}. They devised a bicriteria $274$-approximation
  algorithm for \kmeanso that drops $O(kz\log(n\Delta))$ outliers, where $\Delta$ denotes the maximum distance between data points. 
This is obtained by a simple local search heuristic for the problem. 

\subsection{Related work}
\kmeans is one of the most widely studied problems in the Computer Science literature. The problem is usually considered on
 $d$-dimensional Euclidean space $\RR^d$, where the objective becomes minimizing the variance of the data points with respect
 to the centres they are assigned to.
The most commonly used algorithm for \kmeans is a simple heuristic known as Lloyd's algorithm (commonly referred to as the
 \kmeans algorithm) \cite{Lloyd82}.
Although this algorithm works well in practice, it is known that the 
the cost of the solutions computed by this algorithm can be arbitrarily large compared to the optimum solution \cite{KMNPSW04}.
Under some additional assumptions about the initially chosen centres, however, 
Arthur and Vassilvitskii~\cite{AV07} show that the approximation ratio of Lloyd's algorithm is $O(\log k)$. Later, Ostrovsky 
et al.~\cite{ORSS12} show that  the 
approximation ratio is bounded by a constant if the input points obey some special properties.
Under no such assumptions, Kanungo et al.~\cite{KMNPSW04} proved that a simple local search heuristic (that swaps only  a
 constant number of centres in each iteration) yields an $(9+\epsilon)$-approximation algorithm for Euclidean \kmeans. Recently, 
Ahmadian et al. \cite{ANSW16} improved the approximation ratio to $6.357 + \epsilon$ by primal-dual algorithms. 
For general metrics, Gupta and Tangwongsan \cite{GT08} proved that the local
search algorithm is a $(25+\eps)$-approximation. This was also recently improved to $9+\epsilon$ via primal-dual algorithms 
\cite{ANSW16}.

In order to obtain algorithms with arbitrary small approximation ratios for Euclidean \kmeans, many researchers restrict their
 focus on cases when $k$ or $d$ is constant.
For the case when both $k$ and $d$ are constant, Inaba et al. \cite{Inaba1994} showed that \kmeans can be solved in polynomial 
time. 
For fixed $k$ (but arbitrary $d$), several PTASs have been proposed, each with some improvement over past results in terms of 
running time; e.g., see \cite{DKKR03,FMS07,HK05,HM04,KSS04,KSS10}.
Despite a large number of PTASs for \kmeans with fixed $k$, obtaining a PTAS for \kmeans in fixed dimensional Euclidean space
 had been an open problem for a along time.
Bandyapadhyay and Varadarajan~\cite{BV16} presented a bicriteria PTAS for the problem that finds a $(1+\epsilon)$-approximation
 solution which might use up to
$(1+\epsilon)k$ clusters. The first true PTAS for the problem was recently obtained by \cite{AKM16B,FRS16B} via local search. 
The authors show that their analysis also works for metrics with fixed
 doubling dimension \cite{FRS16B}, and the shortest path metrics of minor closed graphs \cite{AKM16B}.
 
There are several constant factor approximation algorithms for \kmed in general metrics. 
The simple local search (identical with the one for \kmeans) is known to give a $3+\epsilon$ approximation by
 Arya et al.~\cite{AGKMMP01,AGKMMP04}. 
The current best approximation uses different techniques and has an approximation ratio
of $2.675+\epsilon$~\cite{LS13,BPRS14}. For Euclidean metrics, this was recently improved to $2.633+\epsilon$ via primal-dual
 algorithms \cite{ANSW16}.
Arora et al.~\cite{ARR98}, based on Arora's quadtree dissection~\cite{Arora98}, gave the first PTAS for \kmed in fixed dimensional 
Euclidean metrics. We note \cite{ARR98} also gives a PTAS for \uflo and \kmedo in constant-dimensional Euclidean metrics, our results
for Euclidean metrics in particular are therefore most meaningful for \kmeanso.
The recent PTASs (based on local search) for \kmed  by \cite{AKM16B,FRS16B} work also for doubling metrics
\cite{FRS16B} and also for minor-closed metrics \cite{AKM16B}.
No PTAS or bicriteria PTAS was known for such metrics for even uniform-cost UFL with outliers (\uuflo) or \kmedo.

Currently, the best approximation for \ufl in general metrics is a 1.488-approximation~\cite{L11}. As with \kmed, PTASs are known 
 for \ufl in fixed dimensional Euclidean metrics \cite{ARR98}, 
metrics with fixed doubling dimension \cite{FRS16B}, and for the shortest path metrics of minor closed graphs \cite{AKM16B}; 
however the results by \cite{AKM16B} only work for \ufl with uniform opening cost.

\subsection{Our results} 
We present a general method for converting local search analysis for clustering problems without outliers to problems with outliers.
Roughly speaking, we preprocess and then aggregate test swaps used in the analysis of such problems in order to incorporate outliers.
We demonstrate this by applying our ideas to \uuflo, \kmedo, and \kmeanso.

Most of our results are for metrics of fixed doubling dimensions as well as shortest path 
metrics of minor-closed graphs.
First, we show that on such metrics a simple multi-swap local search heuristic yields a PTAS for \uuflo.

\begin{theorem}\label{theo:uflo}
A $\rho=\rho(\epsilon,d)$-swap local search algorithm yields a PTAS for \uuflo for doubling metrics and minor-closed graphs.
Here, $d$ is either the doubling constant of the metric or a constant that depends on a minor that is exclude from the minor-closed family.
\end{theorem}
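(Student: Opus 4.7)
The plan is to adapt the local search PTAS for uniform-cost UFL of \cite{FRS16B,AKM16B} by augmenting each test swap with an outlier exchange. First, I define the local search: maintain a pair $(S,\overline{S})$ where $S\subseteq\fa$ is the current open set and $\overline{S}\subseteq\cX$ with $|\overline{S}|\le z$ is the current outlier set. A $\rho$-swap operation removes up to $\rho$ centres from $S$, adds up to $\rho$ centres from $\fa\setminus S$, and simultaneously replaces $\overline{S}$ by any other set of size at most $z$. To keep the number of iterations polynomial I only accept moves that improve the cost by a factor of at least $1+\epsilon/\poly(n)$, which suffices by the standard potential argument.

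For the analysis, fix an optimum $(\opt,\fopt)$ and a local optimum $(\loc,\floc)$. Partition $\cX$ by outlier status into $O_{bb}=\fopt\cap\floc$, $O_{oo}=\fopt\setminus\floc$ (opt-only outliers, hence served by $\loc$), $O_{lo}=\floc\setminus\fopt$ (loc-only outliers, hence served by $\opt$), and $O_{ss}=\cX\setminus(\fopt\cup\floc)$. I would then invoke the hierarchical partitioning of \cite{FRS16B} for doubling metrics or \cite{AKM16B} for minor-closed graphs: for any $\epsilon>0$ there is a partition of $\loc\cup\opt$ into groups $\mathcal{G}_1,\dots,\mathcal{G}_t$ of size at most $\swaps$ such that the standard ``reassign each served client from its $\loc$-centre to its $\opt$-centre through a within-group detour'' scheme incurs at most $\epsilon\cdot\cost(\opt,\fopt)$ of extra cost and uses each centre in only $O(1)$ detours.

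For each group $\mathcal{G}$ the associated test swap removes $\loc\cap\mathcal{G}$, opens $\opt\cap\mathcal{G}$, and modifies the outlier set as follows: each point of $O_{oo}$ whose serving $\loc$-centre lies in $\mathcal{G}\cap\loc$ is declared an outlier in the test solution (saving its $\loc$-service cost), and each point of $O_{lo}$ whose serving $\opt$-centre lies in $\mathcal{G}\cap\opt$ is removed from the outlier set and assigned to that newly opened $\opt$-centre (paying exactly its $\opt$-service cost). All other served clients are reassigned as in the outlier-free analysis. Applying local optimality to each test swap and summing over groups makes the opening-cost and served-client contributions telescope, while each point of $O_{oo}$ (resp.\ $O_{lo}$) contributes its $\loc$- (resp.\ $\opt$-) service cost to the correct side of the resulting inequality, yielding $\cost(\loc,\floc)\le(1+\epsilon)\cost(\opt,\fopt)$.

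The main obstacle is ensuring every test solution has at most $z$ outliers: per group we add to the outlier set all points of $O_{oo}$ served by $\mathcal{G}\cap\loc$ and remove all points of $O_{lo}$ served by $\mathcal{G}\cap\opt$, and these two counts need not match. To handle this I would preprocess the grouping so that within each group the two counts agree up to an additive constant, and absorb any residual imbalance by a global matching step that pairs groups with opposite imbalances and merges each pair into a single test swap of size $O(\swaps)$ while preserving the $O(1)$-multiplicity property of the decomposition. This aggregation is what forces the swap size to depend on both $\epsilon$ and the dimension parameter $d$, and is the only place where the argument departs substantively from the outlier-free PTAS.
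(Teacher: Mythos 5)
Your high-level strategy matches the paper's: use the partitioning of \cite{FRS16B} (or the $r$-division of \cite{AKM16B}), observe that the per-part outlier counts $|\{j\in\cX^o:\sigma^*(j)\in P\}|$ and $|\{j\in\cX^{o^*}:\sigma(j)\in P\}|$ need not match, and try to aggregate parts to restore feasibility. You correctly isolate the key obstacle. However, your proposed fix — ``pair groups with opposite imbalances and merge each pair into a single test swap of size $O(\swaps)$'' — has a genuine gap. The imbalance $\Delta_P$ of a single part $P$ is bounded only by $z$ and can be arbitrarily large, so there is no way to ``preprocess the grouping so that within each group the two counts agree up to an additive constant'' while keeping groups of size $O(\swaps)$: a single high-imbalance part may need to exchange outliers with an unbounded number of other parts. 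Merging each such chain of parts into one test swap would blow up the swap size beyond any constant, destroying the locality of the argument.

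The paper handles this differently, and the difference is essential. It constructs a bijection $\kappa:\cX^o\to\cX^{o^*}$ one outlier-pair at a time, building a bipartite superedge graph between $\pi^+$ and $\pi^-$; it then slices the superedges into buckets $E_s$ of size $\Theta(\rho/\epsilon)$ and lets each group $G_s$ be the union of parts touching $E_s$. Crucially these groups are \emph{not} disjoint — a high-imbalance part can appear in many groups — and to avoid overcounting, for each group the local-centres in \emph{split} parts (those straddling two buckets, of which there are at most two per group, Lemma~\ref{lem:split}) are left open while only the $\opt$-centres there are opened. The opening-cost overhead from not closing split parts' centres is then charged against the group sizes in Lemma~\ref{lem:facs}, which is where the $\alpha=4\rho/\epsilon$ choice is spent. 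Without the split-part mechanism, the opening-cost terms do not telescope correctly and the final inequality does not close. So your proposal needs the overlapping-group-with-split-parts idea (or an equivalent) to be completed; a global matching of groups with opposite imbalances alone is insufficient.
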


We then extend this result to \kmed and \kmeans with outliers (\kmedo and \kmeanso) and obtain bicriteria PTASs for them.
More specifically:

\begin{theorem}\label{theo:kmeanso}
A $\rho=\rho(\epsilon,d)$-swap local search algorithm yields a bicriteria PTAS for \kmedo and \kmeanso on doubling metrics
and minor-closed graphs; i.e. finds a solutions of
cost at most $(1+\epsilon)\cdot OPT$ with at most $(1+\epsilon)k$ clusters that uses at most $z$ outliers where $OPT$ is the
cost of optimum $k$-clustering with $z$ outliers.
\end{theorem}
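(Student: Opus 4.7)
The plan is to extend the PTAS of Theorem~\ref{theo:uflo} for \uuflo to a bicriteria PTAS for \kmedo and \kmeanso on the same metrics by relaxing the cardinality constraint. Concretely, the local search maintains a pair $(\loc,Z_\loc)$ with $\loc\subseteq\fa$ of size at most $(1+\epsilon)k$ and $|Z_\loc|=z$, where $Z_\loc$ is always chosen greedily as the $z$ points of largest (respectively squared) distance from $\loc$. A single move picks $A\subseteq\fa\setminus\loc$ and $D\subseteq\loc$ of size at most $\rho(\epsilon,d)$ each, with $|A|$ and $|D|$ not required to agree, subject to $|\loc\cup A\setminus D|\le(1+\epsilon)k$. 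The algorithm terminates at a solution where no such move improves the cost by more than a $(1-\epsilon/\poly(n))$ factor; polynomial running time follows by the standard cost-scaling argument.

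To analyse a local optimum $\loc$ against the true optimum $\opt$ (with exactly $k$ centres and outlier set $Z_\opt$), the plan is to import the test swaps used in the proof of Theorem~\ref{theo:uflo}. That proof partitions $\opt\cup\loc$ into groups of size at most $\rho(\epsilon,d)$, each localised in the metric via an $\epsilon$-net argument for doubling metrics or via balanced separators for minor-closed graphs, and produces one test swap per group. Each such swap adds and deletes at most $\rho$ facilities, so assuming $\epsilon k\ge\rho$ (otherwise $k=O(1)$ and the problem is solvable exactly) every test swap is feasible for the local search above: the intermediate solution has at most $|\loc|+\rho\le(1+\epsilon)k$ centres. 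Summing the local-optimality inequalities across all test swaps yields $\cost(\loc)\le(1+O(\epsilon))\cost(\opt)$, after which rescaling $\epsilon$ gives the claimed bound.

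The outlier accounting is handled by the preprocessing/aggregation step inherited from Theorem~\ref{theo:uflo}. Clients are partitioned by their status as outliers in $\loc$, $\opt$, both, or neither. A client that is an outlier in $\opt$ but not in $\loc$ is declared an outlier after the test swap, displacing a less expensive outlier of $\loc$ back into service; a client that is an outlier in $\loc$ but not $\opt$ is absorbed into the reassignment at no extra cost because the greedy outlier reselection can only decrease cost. For \kmeanso, reassignment of a non-outlier from its $\loc$-centre $c$ to its $\opt$-centre $c^*$ uses the relaxed triangle inequality $\delta(x,c^*)^2\le(1+\epsilon)\delta(x,c)^2+(1+1/\epsilon)\delta(c,c^*)^2$, with the inter-centre term $\delta(c,c^*)$ bounded by the locality of the decomposition.

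The main obstacle will be coordinating the outlier aggregation with the cardinality slack. In the \uuflo analysis there is no cap on the number of open facilities, so test swaps are free to open extras in order to protect clients that would otherwise be over-charged; here there are only $\epsilon k$ extra facilities, which must simultaneously cover any unequal insertion/deletion across swaps and any temporary openings needed to reroute outliers. The resolution is to choose $\rho=\rho(\epsilon,d)$ large enough that the geometric decomposition cuts $\opt\cup\loc$ into groups of aggregate insertion budget at most $\epsilon k$, matching what the slack permits. Once this matching is in place, the remaining inequalities are essentially identical to those in the proof of Theorem~\ref{theo:uflo}, with the uniform opening fees there replaced by the cardinality bound here.
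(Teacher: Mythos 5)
Your proposal has a critical gap in the cardinality accounting. You claim that each test swap is feasible because ``the intermediate solution has at most $|\loc|+\rho\le(1+\epsilon)k$ centres'' assuming $\epsilon k\ge\rho$. But this presumes $|\loc|$ has slack $\epsilon k$ below the cap, which is false: as the paper notes explicitly, a local optimum $\loc$ of Algorithm~\ref{alg2:local} will satisfy $|\loc|=(1+\epsilon)k$ exactly (unless $\cost(\loc)=0$), since opening more centres never hurts. So there is \emph{no} slack at a local optimum, and a test swap that opens more centres than it closes is infeasible, regardless of how large $\rho$ is relative to $\epsilon k$. Your resolution of the ``main obstacle'' --- choosing $\rho$ so that groups have ``aggregate insertion budget at most $\epsilon k$'' --- repeats the same error: it conflates the global slack $\epsilon k=|\loc|-|\opt|$ with per-swap slack, which does not exist. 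The paper handles this by proving a strengthened partition theorem (Theorem~\ref{theo:newpartition}) in which \emph{every} part has strictly more local than global centres ($|P\cap\opt|<|P\cap\loc|\le\rho$); the grouping is then tuned (with $\alpha=2\rho+3$) so that the $\alpha-3\ge 2\rho$ guaranteed extra local centres in unsplit parts cover the at most $2\rho$ local centres in split parts that are not closed, making each individual swap non-increasing in size. Without this per-part imbalance guarantee, your test swaps are simply not available to the local search.

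The outlier coordination is also handled too loosely. You write that an $\loc$-outlier that is not an $\opt$-outlier is ``absorbed into the reassignment at no extra cost because the greedy outlier reselection can only decrease cost,'' but this misses the quantitative pairing that the analysis actually requires. To derive a negative $-c_j$ contribution from each $j\in\cX^{o^*}$ (which is essential to cancel $\cost(\loc)$ on the left), one needs to exhibit, for the swap in which $\sigma(j)$ is closed, a specific $j'\in\cX^o$ with $\sigma^*(j')$ opened in the \emph{same} swap so that making $j$ an outlier and connecting $j'$ still respects the budget of $z$ outliers. This is the entire point of the bijection $\kappa$ built by Algorithm~\ref{alg:pairing} and the grouping into $G_s$'s: Lemma~\ref{lem:outliers} guarantees $\sigma^*(j')$ and $\sigma(\kappa(j'))$ land in a common group. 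Simply appealing to greedy reselection does not supply a per-swap budget argument. In short, your proposal correctly identifies that the \uuflo machinery should be reused, but both places where that machinery must actually be modified --- balancing swap sizes under a hard cardinality cap, and pairing outliers group-by-group --- are asserted rather than argued, and the balancing assertion is incorrect.
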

In fact, in minor-closed metrics a true local optimum in the local search algorithm would find a solution using $(1+\epsilon)k$ clusters with cost at most $OPT$
in both \kmedo and \kmeanso, but a $(1+\epsilon)$-factor must be lost due to a standard procedure to ensure the local search algorithm terminates in polynomial time.

We show how these results can be extended to the setting where the metric is the $\ell^q_q$-norm, i.e. cost of connecting two
points $i,j$ is $\delta^q(i,j)$ (e.g. \kmedo is when $q=1$, \kmeanso is when $q=2$).
Finally, we show that in general metrics we still recover a bicriteria constant-factor approximation for \kmedo and \kmeanso. While not our main result,
it gives much more reasonable constants under bicriteria approximations for these problems.
\begin{theorem}\label{theo:general}
A $1/\epsilon^{O(1)}$-swap local search algorithm finds a solution using $(1+\epsilon)k$ clusters and has cost at most $(3+\epsilon) \cdot OPT$
for \kmedo or cost at most $(25+\epsilon) \cdot OPT$ for \kmeanso.
\end{theorem}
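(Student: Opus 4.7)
\medskip
\noindent\textbf{Proof proposal for Theorem \ref{theo:general}.} The plan is to adapt the classical single-swap analysis of Arya et al.\ for \kmed (and its squared-distance analogue by Gupta–Tangwongsan for \kmeans) to the outlier setting, using the bicriteria budget of $\epsilon k$ extra centres to absorb the mismatch between the local and optimal outlier sets. I would let the local search maintain at most $(1+\eps)k$ centres, and at each step consider $\rho$-swaps with $\rho=1/\eps^{O(1)}$ that preserve the size bound. Given a centre set $S'$, its cost is always evaluated with the \emph{optimal} outlier set $Z_{S'}$ (the $z$ points of $\cX$ farthest from $S'$), and a swap is executed only when it reduces cost by a factor better than $1-\eps/\mathrm{poly}(n)$; this ensures polynomial termination at the usual $(1+\eps)$-factor loss.

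For the analysis I would fix a local optimum $(S,Z_S)$ and compare it to the global optimum $(O,Z_O)$ with $|O|=k$. Decompose $\cX$ into $A=\cX\setminus(Z_S\cup Z_O)$ (served in both), $B=Z_O\setminus Z_S$ (served by $S$, discarded by $O$), $C=Z_S\setminus Z_O$ (discarded by $S$, served by $O$), and $Z_S\cap Z_O$; note $|B|=|C|$. The key trick is that in each test swap $S\to S'$, I evaluate $S'$ with the \emph{same} outlier set $Z_S$; because $\cost(S',Z_S)$ overestimates the optimal cost of $S'$, local optimality still yields $\cost(S',Z_S)\ge(1-\eps/\mathrm{poly}(n))\cost(S,Z_S)$. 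I then follow the Arya capture framework: assign each $o\in O$ to a nearest $s(o)\in S$; since $|S|\le(1+\eps)k$, at least $\eps k$ local centres capture nothing. The test swaps either (i) swap out a captor $s$ that captures at most $\rho-1$ optimal centres together with all the centres it captures, or (ii) use one of the free local centres to swap-in a single optimal centre; the extra $\eps k$ slack means one can afford the careful accounting without overloading any local centre.

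Summing the resulting inequalities, the $A$-contribution yields the classical $(3+\eps)$ (resp.\ $(25+\eps)$) factor against $\sum_{x\in A}\delta(x,O)$ (resp.\ $\sum_{x\in A}\delta(x,O)^2$) exactly as in Arya et al./Gupta–Tangwongsan, because for $x\in A$ the optimal assignment $o(x)$ is defined and the test swap containing $o(x)$ can reassign $x$ to it at the standard cost. The hard part is the contribution of $B$: these points contribute to $\cost(S,Z_S)$ but are not served by $O$, so they have no direct charge in $\cost(O,Z_O)$. Here I would exploit the outlier threshold $\tau=\min_{y\in Z_S}\delta(y,S)$: every $x\in B$ satisfies $\delta(x,S)\le\tau$ while every $y\in C$ satisfies $\delta(y,S)\ge\tau$, and since $|B|=|C|$ we get $\sum_{x\in B}\delta(x,S)\le\sum_{y\in C}\delta(y,S)$ (and analogously with squares for \kmeans). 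Each $y\in C$ is served by $O$, so $\delta(y,S)\le\delta(y,O)+\delta(o(y),s(o(y)))$ can be routed through the captor of $o(y)$, and summing these via the same test swaps lets me charge $\sum_{y\in C}\delta(y,S)$ against $\sum_{y\in C}\delta(y,O)+O(\eps)\cost(S)$, which absorbs into the final bound. The main obstacle throughout is this $B$–$C$ exchange step, especially for \kmeans where the squared distances require a careful application of the $(a+b)^2\le(1+\gamma)a^2+(1+1/\gamma)b^2$ inequality (with $\gamma=\Theta(\eps)$) so that the contribution of $B$ is absorbed into the $(25+\eps)$ bound without blowing up the constant.
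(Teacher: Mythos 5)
Your approach shares the Gupta--Tangwongsan capture framework with the paper's proof, but it diverges on the crucial point: how outliers are treated inside the test swaps. The paper explicitly \emph{swaps outlier status}: it builds a bijection $\kappa:\cX^o\to\cX^{o^*}$ and groups the blocks so that whenever a test swap closes $\sigma(x)$ for some $x\in\cX^{o^*}$ (your $B$), the same swap opens $\sigma^*(y)$ for the paired $y=\kappa^{-1}(x)\in\cX^o$ (your $C$); then $x$ becomes an outlier and $y$ stops being one, keeping exactly $z$ outliers. You instead fix $Z_S$ as the outlier set of every perturbed solution $S'$, and this is where the argument breaks.

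With $Z_S$ frozen, a point $x\in B$ whose centre $\sigma(x)$ is closed has nowhere cheap to go: $x$ is an outlier of $O$, so $\sigma^*(x)$ (indeed all of $O$) can be arbitrarily far from $x$, and nothing guarantees any surviving centre of $S'$ is near $x$. The upper bound on $\cost(S',Z_S)$ that you must pair with the local-optimality lower bound is then uncontrolled, and the corresponding inequality is useless. Your $B$--$C$ exchange $\sum_{x\in B}\delta(x,S)\le\sum_{y\in C}\delta(y,S)$ is a true structural fact about $(S,Z_S)$, but it is not a \emph{reassignment rule}: it does not tell $x\in B$ where to move when $\sigma(x)$ closes, and it is not an inequality produced by any test swap. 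The follow-up bound $\sum_{y\in C}\delta(y,S)\le\sum_{y\in C}\delta(y,O)+O(\epsilon)\,\cost(S)$ also does not follow from standard captor-charging: the term $\delta(o(y),s(o(y)))$ is ordinarily charged to some client $j$ with $\sigma^*(j)=o(y)$ and both $c_j,c^*_j$ finite, but $o(y)$ may serve only points of $C$, all of which are outliers of $S$ and contribute nothing to $\cost(S,Z_S)$ (so the charge has no target). The mechanism you are missing is exactly the one the paper builds: pair $B$ with $C$ and engineer the grouping so that, \emph{within a single swap}, every $x\in B$ that loses its centre is matched to a $y\in C$ whose optimal centre is being opened, allowing $x$ and $y$ to trade outlier roles while keeping the outlier count equal to $z$.
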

It should be noted that a true constant-factor approximation for \kmedo is given by Chen \cite{Chen07}, though the constant seems to be very large.
More interestingly, even a constant-factor bicriteria for \kmeanso that uses $(1+\epsilon)k$ clusters and discards the correct number of outliers has not been observed before (recall that the bicriteria algorithm of \cite{CKMN01} for \kmedo
has ratio $O(1/\epsilon)$ for \kmed using at most $(1+\epsilon)z$ outliers).
It is not clear that Chen's algorithm can be extended to give a true constant-factor approximation for \kmedo; one technical challenge is that part of the algorithm reassigns points multiple times
over a series of $O(\log n)$ iterations. So it is not clear that the algorithm in \cite{Chen07} extends to \kmeans.

To complement these results we show that for \uflo (i.e. non-uniform opening costs) any multi-swap local search
has unbounded gap. Also, for \kmedo and \kmeanso we show that 
without violating the number of clusters or outliers, any  multi-swap local search
will have unbounded gap even on Euclidean metrics.

\begin{theorem}\label{theo:localitygap}
Multi-swap local search has unbounded gap for \uflo and for \kmedo and \kmeanso on Euclidean metrics.
\end{theorem}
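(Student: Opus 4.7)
The plan is to exhibit, for each constant swap size $\rho$, a family of instances where $\rho$-swap local search admits a locally optimal solution whose cost is an arbitrarily large multiple of the global optimum. I would organize this as three separate constructions, all sharing one theme: design the relationship between facility/centre placement and outlier budget so that any small local change touches only a vanishing fraction of the ``bad'' configuration.

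For \uflo in general metrics, I would exploit non-uniformity of opening costs. Place a large set of ``hub'' clients served moderately cheaply by one hub facility $f^*$, together with $z$ spatially spread ``decoy'' clients each co-located with a very cheap facility $g_i$. The global optimum opens $f^*$ and treats the decoy clients as outliers. The candidate bad local solution opens all $g_i$ instead: the decoy clients are served for free, but the outlier budget is exhausted by hub clients, leaving a service cost linear in the number of hub clients. Verifying $\rho$-swap local optimality amounts to checking that closing any $\rho\ll z$ decoys, paired with any choice of up to $\rho$ facilities to add, never reallocates enough of the outlier budget to reduce cost; this is shown by a case analysis over the admissible moves, using that at most $\rho$ decoys can be closed per swap so the outlier budget remains pinned on hub clients. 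Tuning the distance scale and $F$ makes the gap unbounded.

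For \kmedo and \kmeanso on Euclidean metrics, and crucially without the freedom to use $(1+\eps)k$ centres or $(1+\eps)z$ outliers, the plan is a symmetric planar construction with $k+1$ candidate cluster sites arranged at the vertices of a regular polygon, each site carrying a dense ``core'' and a small number of ``satellite'' offset points, with $z$ equal to the total number of satellites. The global optimum picks centres at $k$ of the sites, treating the remaining site via the outlier budget (with satellites as outliers), and achieves constant cost. A locally optimal bad configuration picks a different set of $k$ sites, mis-allocating the outlier budget to cores rather than satellites; by the polygonal symmetry, any single-centre swap only rotates the misalignment to a symmetric bad configuration, and no $\rho$-swap can escape when $k+1>\rho$. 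Scaling the cluster size $n$ upward then forces an unbounded ratio.

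The main obstacle in both constructions is controlling the automatic re-optimization of the outlier set that follows any local swap. A swap can in principle shift the optimal outlier assignment dramatically and relieve a large amount of service cost, which would improve the candidate local solution and invalidate the ``bad local'' claim. Both constructions address this by keeping a structural redundancy — the number of decoys for \uflo, the number of candidate cluster sites for \kmedo/\kmeanso — strictly larger than $\rho$, so that a $\rho$-swap leaves a dominant portion of the bad configuration intact. The detailed case analysis across all swap patterns of size up to $\rho$, together with the precise choice of distances and opening costs (or of polygon radius and satellite offsets), is the main technical work.
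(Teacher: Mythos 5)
Your high-level intuition---that the instance must be designed so that any $\rho$-swap leaves a dominant portion of the bad configuration intact---is correct, but both of your concrete constructions contain a flaw that would make the candidate bad solution fail to be locally optimal, and the mechanism the paper uses is essentially different.

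For \uflo, you let the bad local solution carry ``a service cost linear in the number of hub clients.'' If that service cost is large, then the swap $P=\{f^*\},\,Q=\emptyset$ (just opening the hub, closing nothing) is strictly improving whenever $f^*$'s opening cost is less than the service savings---and it must be modest, since the global optimum also opens $f^*$. So your candidate is not a local optimum. The paper sidesteps this by putting \emph{exactly} $z$ colocated clients at the hub and giving the hub opening cost equal to $\rho$. In the bad local solution (open all $z$ unit-cost decoys), those $z$ hub clients are precisely the outliers, so the bad solution's \emph{service} cost is $0$ and the gap lives entirely in opening cost ($z$ vs.\ $\rho$). Any swap that opens the hub saves nothing on service (the hub clients were already free as outliers) and the opening cost $\rho$ exactly cancels the savings from closing at most $\rho$ decoys. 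That balancing is the device you are missing.

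For \kmedo and \kmeanso, the symmetric-polygon idea does not work for two reasons. First, you cannot ``mis-allocate the outlier budget'': given a fixed set of open centres, the cost function automatically discards the $z$ farthest points, so the bad local optimum must arise from a bad choice of \emph{centres}, not a bad choice of outliers. Second, if the $k+1$ sites are genuinely symmetric, then any choice of $k$ of them has the same cost as any other, so there is no gap between a local and a global optimum. The paper's construction is deliberately asymmetric (distinct families $B$, $C_i$, $D_j$, $E$ with different multiplicities and radii) and the locality check hinges on the numeric inequality $\gamma<(u-1)\beta<2\gamma$: closing $\rho$ of the $D_j$'s forces opening $\rho$ of the $C_i$'s to keep the outlier count at $z$, and this trades $\rho\gamma$ of savings (from $E$'s perimeter) against $\rho(u-1)\beta$ of new cost, which is a net loss by $(u-1)\beta>\gamma$, while the ratio between the two solutions is still $\Omega(u)$ by $(u-1)\beta<2\gamma$. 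Your proposal has no analogue of this tuning and would need to be redesigned from scratch.
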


\noindent
{\bf Outline of the paper:}
We start with preliminaries and notation. Then in Section \ref{sec:uflo}, we prove Theorem \ref{theo:uflo}.
In Section \ref{sec:kmeanso} we prove Theorem \ref{theo:kmeanso} for the case of \kmed on doubling metrics
and then in \ref{sec:extensions} we show how
to extend these theorems to $\ell^q_q$-norm distances as well as minor-closed families of graphs. 
Theorem \ref{theo:general} is proven in Section \ref{sec:general}.
Finally, the proof of Theorem \ref{theo:localitygap} comes in Section \ref{sec:gap}.

\subsection{Preliminaries and Notation}\label{sec:prem}
In \uuflo we are given a set of $\cX$ points, a set $\fa$ of centres and $z$ for the number of outliers. Our goal is to select a set $C\subset\fa$ to open and a set $Z\subset\cX$ to
discard and assign each $j\in\cX-Z$ to the nearest centre in $C$ to minimize 
 $\sum_{x\in \cX \setminus Z}\delta(x,C) + |C|$, where $\delta(x,C)$ is the distance of $x$ to nearest
 $c\in C$.
In \kmedo and (discrete) \kmeanso, along with $\cX$, $\fa$, and $z$, we have an integer $k$ as the number of clusters. We like 
to find $k$ centres $C=\{c_1,\ldots,c_k \} \subseteq \fa$ and a set of (up to) $z$ points $Z$ 
to act as outliers. In \kmedo, we like to minimize  $\sum_{x\in \cX \setminus Z}\delta(x,C)$ and in \kmeanso, we want
to minimize $\sum_{x\in \cX \setminus Z}\delta(x,C)^2$.

For all these three problems, if we have the $\ell^q_q$-norm then we like to minimize
 $\sum_{x\in \cX \setminus Z}\delta^q(x,C)$.
 We should note that in classical \kmeans (in $\RR^d$), one is not given a candidate set of potential
centres, but they can be chosen anywhere. However,  
by using the classical result of~\cite{Matousek00}, at a loss of $(1+\epsilon)$ factor
we can assume we have a set $\fa$ of ``candidate'' centres from which the centres can be chosen
 (i.e. reduce to the discrete case considered here).
This set can be computed in time $O(n\epsilon^{-d}\log(1/\epsilon))$ and $|\fa|=O(n\epsilon^{-d}\log(1/\epsilon))$.

\section{Uniform-Cost UFL with Outliers in Doubling Metrics}\label{sec:uflo}
We start with presenting an approximation scheme for \uuflo in doubling metrics (Theorem \ref{theo:uflo}).
Recall there is already a PTAS for \uflo in constant-dimensional Euclidean metrics using dynamic programming for \ufl through quadtree decompositions \cite{ARR98}.
However, our approach generalizes to many settings where quadtree decompositions
are not known to succeed such as when the assignment cost between a point $j$ and a centre $i$ is $\delta(j,i)^q$ for constant 
$1 < q < \infty$
including \kmeans distances ($q = 2$) and also to shortest-path metrics of edge-weighted minor-closed graphs.
Furthermore, our technique here extends to \kmeanso (as seen in the next section).
Still, we will initially present our approximation scheme in this simpler setting to lay the groundwork and introduce the main ideas.

Recall that we are given a set $\cX$ of points and a set $\fa$ of possible centres in a metric space with doubling dimension $d$ and a number $z$ bounding the number of admissible outliers.
As the opening costs are uniform, we may scale all distances and opening costs so the opening cost of a centre is 1.
For any $\emptyset \subsetneq \loc \subseteq \fa$, order the points $j \in \cX$ as $j^\loc_1, j^\loc_2, \ldots, j^\loc_n$ in nondecreasing order
of distance $\delta(j^\loc_i, \loc)$.
The cost of $\loc$ is then $\cost(\loc) := \sum_{\ell=1}^{n-z} \delta(j^\loc_\ell, \loc) + |\loc|$.
That is, after discarding the $z$ points that are furthest from $\loc$ the others are assigned to the nearest centre in 
$\loc$: we pay this total assignment cost for all points that are not outliers and also the total centre opening cost $|\loc|$.
The goal is to find $\emptyset \subsetneq \loc \subseteq \fa$ minimizing $\cost(\loc)$.

Let $\epsilon > 0$ be a constant. Let $\rho' := \rho'(\epsilon, d)$ be some constant we will specify later.
We consider a natural multiswap heuristic for \uuflo, described in Algorithm \ref{alg:local}.

\begin{algorithm*}[t]
 \caption{\uufl $\rho'$-Swap Local Search} \label{alg:local}
\begin{algorithmic}
\State Let $\loc$ be an arbitrary non-empty subset of $\fa$
\While{$\exists$ sets $P\subseteq\fa-\loc$, $Q\subseteq \loc$ with $|P|,|Q|\leq \rho'$ s.t. 
$\cost((\loc - Q) \cup P)<\cost(\loc)$}
\State $\loc\leftarrow (\loc - Q) \cup P$
\EndWhile
\State \Return $\loc$
\end{algorithmic}
\end{algorithm*}

Each iteration can be executed in time $|\cX| \cdot |\fa|^{O(\rho')}$. It is not clear that the number of iterations is bounded by a polynomial.
However, the standard trick from \cite{AGKMMP01,KMNPSW04} works in our setting. That is, in the loop condition we instead
perform the swap only if $\cost((\loc - Q)\cup P) \leq (1+\epsilon/|\fa|) \cdot \cost(\loc)$. This ensures the running time is polynomial
in the input size as every $|\fa|/\epsilon$ iterations the cost decreases by a constant factor.

Our analysis of the local optimum
follows the standard template of using test swaps to generate inequalities to bound $\cost(S)$. The total number of swaps we use
to generate the final bound is at most $|\fa|$, so (as in \cite{AGKMMP01,KMNPSW04}) the approximation guarantee of a local optimum will only be degraded by an additional $(1+\epsilon)$-factor.
For the sake of simplicity in our presentation we will bound the cost of a local optimum solution returned by Algorithm \ref{alg:local}.


\subsection{Notation and Supporting Results from Previous Work}\label{sec:recall}
We use many results from \cite{FRS16B}, so we use the same notation. In this section, these results are recalled and a quick 
overview of the analysis of the multiswap local search heuristic for \uufl
is provided; this is simply Algorithm \ref{alg:local} where the cost function is defined appropriately for \uufl. Recall that
 in \ufl, each centre $i \in \fa$ has an {\em opening cost} $f_i \geq 0$.
Also let $\cost(\loc) = \sum_{j \in \cX} \delta(j, \loc) + \sum_{i \in \loc} f_i$. In \uufl all opening costs are uniform.

Let $\loc$ be a local optimum solution returned by Algorithm \ref{alg:local} and $\opt$ be a global optimum solution.
As it is standard in local search algorithms for uncapacitated clustering problems, we may assume $\loc \cap \opt = \emptyset$. 
This can be assumed by duplicating each $i \in \fa$, asserting $\loc$ uses
only the original centres and $\opt$ uses only the copies. It is easy to see $\loc$ would still be a local optimum solution.
Let $\sigma : \cX \rightarrow \loc$ be the point assignment in the local optimum and $\sigma^* : \cX \rightarrow \opt$ be the 
point assignment in the global optimum.
For $j \in \cX$, let $c_j = \delta(j, \loc)$ and $c^*_j = \delta(j, \opt)$ (remember there are no outliers in this review of 
\cite{FRS16B}).

For each $i \in \loc \cup \opt$, let $D_i$ be the distance from $i$ to the nearest centre of the other {\em type}. That is, for 
$i \in \loc$ let $D_i = \delta(i, \opt)$ and for $i^* \in \opt$ let $D_{i^*} = \delta(i^*, \loc)$.
Note for every $j \in \cX$ and every $i' \in \{\sigma(j), \sigma^*(j)\}$ that
\begin{equation} \label{eqn:radius}
D_{i'} \leq \delta(\sigma(j), \sigma^*(j)) \leq c^*_j + c_j.
\end{equation}

A special pairing $\cents \subseteq \loc \times \opt$ was identified in \cite{FRS16B} (i.e. each $i \in \loc \cup \opt$ appears at most once among pairs in $\cents$)
with the following special property.
\begin{lemma}[Lemma 3 in \cite{FRS16B}, paraphrased]\label{lem:cents}
For any $A \subseteq \loc \cup \opt$ such that $A$ contains at least one centre from every pair in $\cents$, $\delta(i, A) \leq 5 \cdot D_i$ for every $i \in \loc \cup \opt$.
\end{lemma}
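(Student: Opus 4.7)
The plan is to construct $\cents$ explicitly from nearest-neighbour information and then verify the distance bound by a short triangle-inequality chase. First, for every $i^* \in \opt$, let $\pi(i^*) \in \loc$ be a nearest local centre, so that $\delta(i^*, \pi(i^*)) = D_{i^*}$, and for every $i \in \loc$ set $M(i) := \pi^{-1}(i) \subseteq \opt$. Whenever $M(i) \neq \emptyset$, let $\psi(i) := \argmin_{i^* \in M(i)} \delta(i, i^*)$ and place the pair $(i, \psi(i))$ in $\cents$. By construction each $i \in \loc$ belongs to at most one pair, and each $i^* \in \opt$ belongs to at most one pair since $\pi(i^*)$ is uniquely defined and only $\psi(\pi(i^*)) = i^*$ could include $i^*$ in a pair; thus $\cents$ is a partial matching of $\loc \cup \opt$.

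Next, I would fix any $A \subseteq \loc \cup \opt$ that hits every pair in $\cents$ and verify $\delta(i, A) \leq 5\,D_i$ in two symmetric cases. For $i \in \loc$, let $i^* := \phi(i)$ be a nearest optimal centre to $i$ (so $\delta(i, i^*) = D_i$) and let $i' := \pi(i^*)$; since $i^* \in M(i')$, the centre $i'$ is paired with $\psi(i')$ and $\delta(i', \psi(i')) \leq \delta(i', i^*) = D_{i^*} \leq D_i$. Because $A$ must contain either $i'$ or $\psi(i')$, three applications of the triangle inequality along the chain $i \to i^* \to i' \to \psi(i')$, each leg of length at most $D_i$, bound $\delta(i, A)$ by $3\,D_i \leq 5\,D_i$. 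A symmetric but shorter argument handles $i^* \in \opt$: walk from $i^*$ to $i := \pi(i^*) \in \loc$ (length $D_{i^*}$) and then to the paired witness $\psi(i)$ (length at most $\delta(i, i^*) = D_{i^*}$ by the choice of $\psi$), giving $\delta(i^*, A) \leq 2\,D_{i^*}$.

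The main obstacle is ensuring that the pairing simultaneously has good combinatorial structure and good metric behaviour for unpaired centres; an $i \in \loc$ with $M(i) = \emptyset$ is not in any pair, yet still needs a paired witness within $5\,D_i$. The nearest-neighbour pivot through $\phi(i)$ and $\pi(\phi(i))$ is what makes this work, since it re-enters the matched portion of $\cents$ within two steps and the chosen $\psi$-partner of that re-entry point cannot drift too far. Carefully accounting for the three legs above pins down the constant; the bound $5$ in the statement is deliberately loose to absorb the worst-case chain in a single clean formula.
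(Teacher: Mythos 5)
The paper does not prove this lemma; it cites Lemma~3 of \cite{FRS16B} and uses it as a black box, so there is no in-paper proof to compare your attempt against. Your reconstruction is correct and uses the natural pairing: map each $i^* \in \opt$ to its nearest $\pi(i^*) \in \loc$ and match each $i \in \loc$ with nonempty preimage $\pi^{-1}(i)$ to the closest element of that preimage. The triangle-inequality chains are sound. In fact your argument gives the tighter bounds $\delta(i,A) \le 3D_i$ for $i \in \loc$ (via the chain $i \to i^* \to \pi(i^*) \to \psi(\pi(i^*))$, where $i^*$ is the optimal centre nearest $i$, each leg being at most $D_i$) and $\delta(i^*,A) \le 2D_{i^*}$ for $i^* \in \opt$; the constant $5$ in the lemma is simply the looser bound carried over from the reference, so there is no conflict. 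The key step you get right is that even an unpaired $i \in \loc$ (one with $\pi^{-1}(i) = \emptyset$) reaches a matched local centre within two hops, since $\pi(i^*)$ is always matched for the optimal $i^*$ nearest to $i$.
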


Next, a {\em net} is cast around each $i \in \loc$. The idea is that any swap that has $i \in \loc$ being closed would have something open near every $i^* \in \opt$ that itself is close to $i$.
More precisely, \cite{FRS16B} identifies a set $\net \subseteq \loc \cup \opt$ with the following properties. For each $i \in \loc$ and $i^* \in \opt$ with $\delta(i, i^*) \leq D_i/\epsilon$ and $D_{i^*} \geq \epsilon \cdot D_i$
there is some pair $(i, i') \in \net$ with $\delta(i', i^*) \leq \epsilon \cdot D_{i^*}$. The set $\net$ contains further properties to enable Theorem \ref{thm:partition} (below), but these  are sufficient for our discussion.

The last major step in \cite{FRS16B} before the final analysis was to provide a structure theorem showing $\loc \cup \opt$ can be partitioned into test swaps that mostly allow the redirections discussed above.
\begin{theorem}[Theorem 4 in \cite{FRS16B}, slightly adjusted]\label{thm:partition}
For any $\epsilon > 0$, there is a constant $\rho := \rho(\epsilon, d)$ and
a randomized algorithm that samples a partitioning $\pi$ of $\opt \cup \loc$ such that:
\begin{itemize}
\item For each part $P \in \pi$, $|P \cap \opt|, |P \cap \loc| \leq \rho$.
\item For each part $P \in \pi$, $\loc \triangle P$ includes at least one centre from every pair in $\cents$.
\item For each $(i^*, i) \in \net$, $\Pr[i, i^* {\rm ~lie~in~different~parts~of~}\pi] \leq \eps$.
\end{itemize}
\end{theorem}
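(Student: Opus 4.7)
The plan is to build $\pi$ via a randomized hierarchical padded decomposition calibrated to the radii $D_i$, and then post-process to enforce the condition involving $\cents$. First, I would classify each $i \in \loc \cup \opt$ by a scale $s(i) := \lfloor \log_{1+\epsilon} D_i \rfloor$, and for each scale separately run a Fakcharoenphol--Rao--Talwar style random partition in which each centre picks a random priority and a random radius in $[(1+\epsilon)^{s(i)}/\epsilon,\, (1+\epsilon)^{s(i)+1}/\epsilon]$, with each centre assigned to the highest-priority centre whose ball contains it. In a doubling metric, this yields parts of diameter $O((1+\epsilon)^{s(i)}/\epsilon)$; by the doubling property, each part contains at most $(1/\epsilon)^{O(d)}$ centres of that scale; and any two points within distance $\epsilon \cdot (1+\epsilon)^{s(i)}$ are separated with probability $O(d\epsilon)$.

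Second, I would combine the per-scale partitions into a single partition of $\loc \cup \opt$ through a net-tree hierarchy, in which centres at a finer scale are absorbed into the coarser-scale part containing their designated ancestor. This preserves the part-size bound $\rho_0 := (1/\epsilon)^{O(d)}$ and the per-scale separation probability. Third, to enforce property 2, I would run a merging pass: for each pair $(i, i^*) \in \cents$ whose endpoints lie in different parts, merge those parts. Because $\cents$ is a matching and paired centres satisfy $\delta(i, i^*)$ bounded by a small constant multiple of $D_i$ (from the construction of $\cents$ in \cite{FRS16B}), each such merge joins parts whose leaders lie within $O(D_i/\epsilon)$, and the doubling property then limits how many such parts can chain together, keeping the final size at $\rho(\epsilon,d) = (1/\epsilon)^{O(d)}$.

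Property 3 would then be essentially automatic: for $(i^*, i) \in \net$, the defining conditions $D_{i^*} \ge \epsilon D_i$ and $\delta(i,i^*) \le D_i/\epsilon$ place $i$ and $i^*$ into comparable scale classes at a distance that is a small constant fraction of the scale, so the padded decomposition separates them with probability $O(d\epsilon)$; the merging step can only decrease this. Rescaling $\epsilon \leftarrow \epsilon/\Theta(d)$ at the outset absorbs the constant to achieve the required separation bound $\epsilon$. The main obstacle is exactly the merging step for property 2: a priori the merges could cascade and blow up part sizes, so the matching structure of $\cents$ (each centre in at most one pair) together with the doubling property is essential to bound the cascade by $(1/\epsilon)^{O(d)}$. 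For minor-closed graphs the argument is identical except that the Fakcharoenphol--Rao--Talwar scheme is replaced by a Klein--Plotkin--Rao style padded decomposition with analogous separation guarantees, where $d$ now depends on the excluded minor.
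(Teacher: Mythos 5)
The paper itself does not prove this theorem; it invokes Theorem~4 of [FRS16B] verbatim and observes that the ``balancing'' step at the end of that proof (which enforces $|P \cap \opt| = |P \cap \loc|$) can simply be omitted. So the benchmark is the construction in [FRS16B], not anything developed here. As the paper's own Appendix~B records, that construction first produces a random partition with parts of size at most $(d/\epsilon)^{\Theta(d/\epsilon)}$ that already satisfies conditions~2 and~3, and only afterwards merges a constant number of parts to balance sizes.

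Your attempt is a genuinely different reconstruction, and the overall shape (hierarchical padded decomposition at scales tied to the $D_i$, keep $\cents$-pairs together, separate $\net$-pairs rarely) is in the right spirit. But there is a real gap in the step enforcing property~2. You enforce it by post-merging any two parts that separate the endpoints of a $\cents$-pair, and you control the resulting size blowup by claiming ``the doubling property then limits how many such parts can chain together.'' That claim does not follow. Because $\cents$ is a matching of mutually close pairs, a forced-merge chain can look like $i_1 \in P_1$, $\{i_1^*, i_2\} \subseteq P_2$, $\{i_2^*, i_3\} \subseteq P_3, \ldots$, with each hop of length only $O(D/\epsilon)$. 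Doubling bounds the number of net points inside any fixed ball, but it does not bound the length of a path of disjoint bounded-diameter parts that keeps marching outward; such a chain can be arbitrarily long, so you have not actually established a bound $\rho(\epsilon,d)$ on part sizes. The size bound you assert, $(1/\epsilon)^{O(d)}$, is also not the one in [FRS16B], whose bound $(d/\epsilon)^{\Theta(d/\epsilon)}$ is exponentially larger, another indication that your construction is not recovering the same object. In [FRS16B] the preservation of $\cents$-pairs is built into the decomposition up front rather than repaired afterward, precisely to sidestep this cascade issue; to make a merge-based repair work you would need an additional structural fact about $\cents$ (e.g., that the ``bridging'' centres like $i_2$ above cannot recur, or that a chain must fold back into a bounded ball), and you have not supplied one. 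The combination across scales via a ``net-tree hierarchy'' is also too sketchy to verify and is itself one of the delicate points of the original argument. Finally, the KPR remark is extraneous here: this theorem is stated for doubling metrics only, and the paper handles minor-closed graphs by an entirely separate mechanism ($r$-divisions, Lemma~\ref{lem:newpartition2}).
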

There is only a slight difference between this statement and the original statement in \cite{FRS16B}. Namely, the first condition of Theorem 4 in \cite{FRS16B} also asserted $|P \cap \opt| = |P \cap \loc|$.
As noted at the end of \cite{FRS16B}, this part can be dropped by skipping one final step of the proof that ``balanced'' parts of the partition that were constructed (also see \cite{FRS16} for further details).

The analysis in \cite{FRS16B} used Theorem \ref{thm:partition}
to show $\cost(\loc) \leq (1+O(\epsilon)) \cdot OPT$ generated an inequality by swapping each part $P \in \pi$. Roughly speaking, for each $j \in \cX$ with probability at least $1-\epsilon$
(over the random construction of $\pi$), the swap that closes $\sigma(j)$ will open something very close to $\sigma(j)$ or very close to $\sigma^*(j)$. With the remaining probability, we can at least move $j$
a distance of at most $O(c^*_j + c_j)$. Finally, if $j$ was never moved to something that was close to $\sigma^*(j)$ this way then we ensure we move $j$ from $\sigma(j)$ to $\sigma^*(j)$ when $\sigma^*(j)$ is swapped in.

In our analysis for clustering with outliers, our reassignments for points that are not outliers in either the local or global optimum are motivated by this approach. Details will appear
below in our analysis of Algorithm \ref{alg:local}.


\subsection{Analysis for Uniform-Cost UFL with Outliers: An Outline}
Now let $\loc$ be a locally optimum solution for Algorithm \ref{alg:local}, let $\cX^a$ be the points in $\cX$ that are assigned to $\loc$ and $\cX^o$ be the points in $\cX$ that are outliers when opening $\loc$.
Similarly, let $\opt$ be a globally optimum solution, let $\cX^{a^*}$ be the points in $\cX$ that are assigned to $\opt$ and $\cX^{o^*}$ be the points in $\cX$ that are outliers when opening $\opt$.
Note $|\cX^o| = |\cX^{o^*}| = z$.

Let $\sigma : \cX \rightarrow \loc \cup \{\bot\}$ assign $j \in \cX^a$ to the nearest centre in $\loc$ and $j \in \cX^o$ to $\bot$. Similarly, let $\sigma^* : \cX \rightarrow \opt \cup \{\bot\}$ map each $j \in \cX^{a^*}$ to the nearest centre in $\opt$
and each $j \in \cX^{o^*}$ to $\bot$. For $j \in \cX$, we let $c_j = 0$ if $j \in \cX^o$ and, otherwise, let $c_j = \delta(j, \loc) = \delta(j, \sigma(j))$. Similarly, let $c^*_j = 0$ if $j \in \cX^{o^*}$ and, otherwise,
let $c^*_j = \delta(j, \opt) = \delta(j, \sigma(j))$.

Our starting point is the partitioning scheme described in Theorem \ref{thm:partition}. The new issue to be handled is in reassigning the outliers when a part is swapped. That is,
for any $j \in \cX^{o^*}$ any swap that has $\sigma(j)$ swapped out cannot, in general, be reassigned anywhere cheaply.

Really the only thing we can do to upper bound the assignment cost change for $j$ is to make it an outlier.
We can try assigning each $j \in \cX^{o}$ to $\sigma^*(j)$ if it is opened, thereby allowing one $j \in \cX^{o^*}$ with $\sigma(j)$ being swapped out to become an outlier.
However there may not be enough $j' \in \cX^{o}$ that have $\sigma^*(j)$ opened after the swap. That is, we might not 
remove enough outliers from the solution $\loc$ to be able to let all
such $j$ become outliers.

Our approach is to further combine parts $P$ of the partition $\pi$ and perform the swaps simultaneously for many of these parts. Two of these parts in a larger grouping will not actually be swapped out:
their centres in $\opt$ will be opened to free up more spaces for outliers yet their centres in $\loc$ will not be swapped out. Doing this carefully, we ensure that the total number of $j \in \cX^{o^*}$ that have $\sigma(j)$ being closed
is at most the total number of $j \in \cX^{o}$ that have $\sigma^*(j)$ being opened.

These larger groups that are obtained by combing parts of $\pi$ are not disjoint. However, the overlap of centres between larger groups will be negligible compared to $|\loc| + |\opt|$.


\subsection{Grouping the Parts}
We will assume $\cX^{o} \cap \cX^{o^*} = \emptyset$. This is without loss of generality as $\loc$ would still be a local optimum in the instance with $\cX^{o} \cap \cX^{o^*}$ removed and $z$ adjusted.
Recall we are also assuming $\loc \cap \opt = \emptyset$.

For each part $P$ of $\pi$, let $\Delta_P := |\{j \in \cX^{o} : \sigma^*(j) \in P\}| - |\{j \in \cX^{o^*} : \sigma(j) \in P\}|$. This is the difference between the number of outliers we can reclaim by moving them to $\sigma^*(j)$
(if it is open after swapping $P$) and the number of outliers $j$ that we must create because $\sigma(j)$ was closed when swapping $P$.

Consider the following refinements of $\pi$: $\pi^+ = \{P \in \pi : \Delta_P > 0\}, \pi^- = \{P \in \pi : \Delta_P < 0\}$ and $\pi^0 = \{P \in \pi : \Delta_P = 0\}$. Intuitively, nothing more needs to be done to prepare parts $P \in \pi^0$ for swapping
as this would create as many outliers as it would reclaim in our analysis framework. We work toward handling $\pi^+$ and $\pi^-$.

Next we construct a bijection $\kappa : \cX^{o} \rightarrow \cX^{o^*}$. We will ensure when $\sigma(j)$ is swapped out for some $j \in \cX^{o^*}$ that $\sigma^*(\kappa^{-1}(j))$ will be swapped in. So there is space to make $j$ an
outlier in the analysis. There are some cases in our analysis
where we never swap out $\sigma(j)$ for some $j \in \cX^{o^*}$, but we will still ensure $\sigma^*(\kappa^{-1}(j))$ is swapped in at some point so we can still make $j$ an outlier while removing $\kappa^{-1}(j)$ as an outlier
to get the negative dependence on $c_j$ in the final inequality.

To start defining $\kappa$, for each $P \in \pi$ we pair up points
in $\{j \in \cX^{o^*} : \sigma(j) \in P\}$ and $\{j \in \cX^o : \sigma^*(j) \in P\}$ arbitrarily until one of these two groups is exhausted. These pairs define some mapping of $\kappa$.
The number of unpaired points in $\{j \in \cX^{o^*} : \sigma(j) \in P\}$
is exactly $-\Delta_P$ if $\Delta_P < 0$ and the number of unpaired points in $\{ j \in \cX^o : \sigma^*(j) \in P\}$ is exactly $\Delta_P$.

Having done this for each $P$, we begin pairing unpaired points in $\cX^{o} \cup \cX^{o^*}$ between 
parts.
Arbitrarily order $\pi^+$ as $P_1^+, P_2^+, \ldots, P_m^+$ and $\pi^-$ as $P_1^-, P_2^-, \ldots, P_\ell^-$.
We will complete the pairing $\kappa$ and also construct edges in a bipartite graph $H$ with $\pi^+$ on one side and $\pi^-$ on the other side using Algorithm \ref{alg:pairing}.
To avoid confusion with edges in the distance metric, we call the edges of $H$ between $\pi^+$ and $\pi^-$ {\em superedges}.

In Algorithm \ref{alg:pairing}, we say a part $P \in \pi^+ \cup \pi^-$ has an unpaired point $j \in \cX^{o^*} \cup \cX^{o}$ and that $j$ is an unpaired point of $P$
if, currently, $\kappa$ has not paired $j$ and $\sigma(j) \in P$ or $\sigma^*(j) \in P$ (whatever is relevant).
The resulting graph over $\pi^+, \pi^-$ is depicted in Figure \ref{fig:grouping}, along with other features described below.

\begin{algorithm*}[t]
 \caption{Pairing Unpaired Outliers} \label{alg:pairing}
\begin{algorithmic}
\State $\cP \leftarrow \emptyset$ \Comment{A set of superedges between $\pi^+$ and $\pi^-$.}
\State $a \leftarrow 1, b \leftarrow 1$
\While{there are unpaired points}
\State Arbitrarily pair up (via $\kappa$) unpaired points between $P^+_a$ and $P^-_b$ until no longer possible.
\State $\cP \leftarrow \cP \cup \{(P^+_a, P^-_b)\}.$
\If{$P^+_a$ has no unpaired point}
\State $a \leftarrow a+1$
\EndIf
\If{$P^-_b$ has no unpaired point}
\State $b \leftarrow b+1$
\EndIf
\EndWhile
\State \Return $\loc$
\end{algorithmic}
\end{algorithm*}

\begin{figure}[th]
\begin{center}
\includegraphics[scale=0.53]{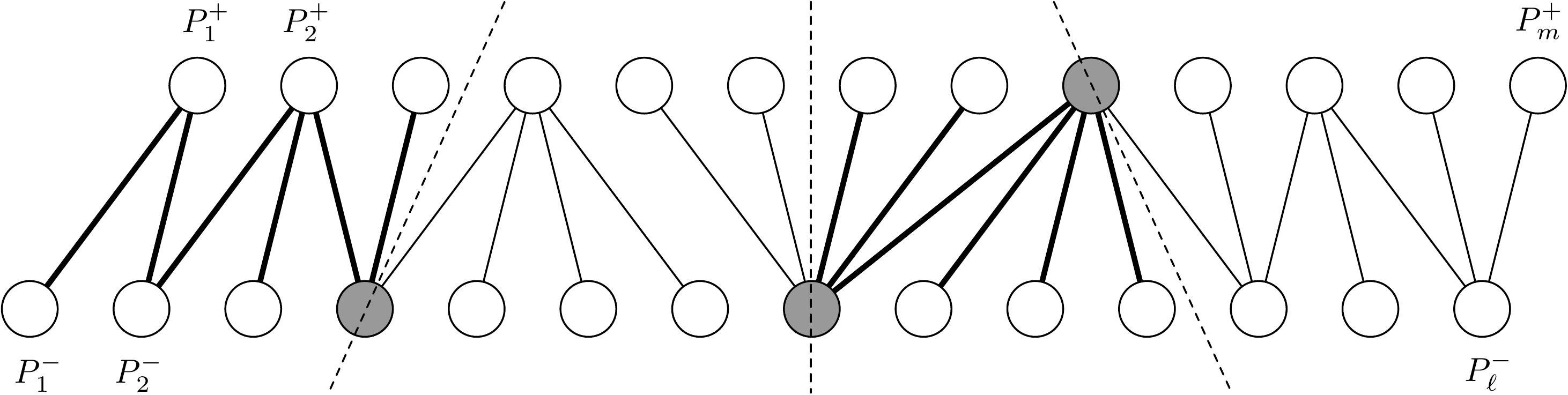}
\end{center}
\caption{The bipartite graph with sides $\pi^+, \pi^-$ and edges $\cP$. The edges are ordered left-to-right in order of their creation time, so $e_0$ is the leftmost edge.
A grouping with group size $\alpha = 6$ is depicted. The edges in $E_1$ and $E_3$ are bold.
Note the last group has more than $\alpha$ edges.
The parts that are split by some group are shaded. While not depicted, a part could be split by many groups (if it has very high degree in the bipartite graph).}\label{fig:grouping}
\end{figure}

We now group some superedges together.
Let $\alpha = 4\rho/\epsilon$ denote the {\em group size}. 
We will assume $|\cP| > \alpha$, as otherwise $|\pi^+ \cup \pi^-| \leq 2\alpha$ and we could simply merge all parts in $\pi^+ \cup \pi^-$ into a single part $P'$ with $\Delta_{P'} = 0$ with $|P' \cap \loc|, |P' \cap \opt| \leq 2\alpha\rho$.
The final local search algorithm will use swap sizes greater than $2\alpha\rho$ and the analysis showing $\cost(\loc) \leq (1+O(\epsilon)) \cdot \cost(\opt)$ would then follow almost exactly as we show\footnote{One very
minor modification is that the presented proof of Lemma \ref{lem:facs} in the final analysis ultimately uses $|\cP| \geq \alpha$. It still holds otherwise in a trivial way as there would be no overlap between groups.}.

Order edges $e_0, e_1, \ldots, e_{|\cP|-1}$ of $\cP$ according to when they were formed.
For each integer $s \geq 0$ let $E_s = \{e_i : \alpha \cdot s \leq i < \alpha \cdot (s+1)\}$. Let $s'$ be the largest index with $E_{s'+1} \neq \emptyset$ (which exists by the assumption $|\cP| > \alpha$). Merge the last two groups by replacing
$E_{s'}$ with $E_{s'} \cup E_{s'+1}$. Finally, for each $0 \leq s \leq s'$ let $G_s \subseteq \loc \cup \opt$ consist of all centres $i \in \loc \cup \opt$ belonging to a part $P$ that is an endpoint of some superedge in $E_s$.
The grouping is $\cG = \{G_s : 0 \leq s \leq s'\}$.
The groups of superedges $E_s$ are depicted in Figure \ref{fig:grouping}.
Note each part $P \in \pi^+ \cup \pi^-$ is contained in at least one group of $\cG$ because $\Delta_P \neq 0$ (so a
superedge edge in $\cP$ was created with $P$ as an endpoint). The following lemma is easy (see Appendix \ref{app:missingProofs}).

\begin{lemma}\label{lem:size}
For each $G_s \in \cG$, $\alpha-1 \leq |G_s| \leq 8 \rho\alpha$.
\end{lemma}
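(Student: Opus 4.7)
The plan is to derive both bounds by (i) counting how many parts of $\pi^+ \cup \pi^-$ appear as endpoints of the superedges in $E_s$, and (ii) multiplying by the maximum size of a single part. Two facts from the setup will drive everything: Theorem \ref{thm:partition} guarantees $|P \cap \loc|, |P \cap \opt| \leq \rho$ for each part $P$ (so $|P| \leq 2\rho$), and the pairing algorithm has the property that each iteration of its while loop advances at least one of the pointers $a$ or $b$ (otherwise the loop would not terminate, since after pairing no unpaired points remain in $P^+_a$ or $P^-_b$).

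For the upper bound, I would observe that every $E_s$ with $s < s'$ has exactly $\alpha$ superedges by construction, and the merged last group $E_{s'}$ has at most $2\alpha - 1$ superedges. Each superedge has one endpoint in $\pi^+$ and one in $\pi^-$, so the number of distinct parts appearing as endpoints of $E_s$ is at most $2(2\alpha-1) < 4\alpha$. Since parts in $\pi$ are disjoint and each contributes at most $2\rho$ centres to $\loc \cup \opt$, this yields
\[
|G_s| \;\leq\; 4\alpha \cdot 2\rho \;=\; 8\rho\alpha.
\]

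For the lower bound, I would use the monotone advancement in Algorithm \ref{alg:pairing}. Indexing the superedges in $E_s$ as $e_{\alpha s}, \ldots, e_{\alpha s + t}$ with $t \geq \alpha - 1$, each edge carries a pair $(P^+_{a_i}, P^-_{b_i})$ with $a_i \leq a_{i+1}$, $b_i \leq b_{i+1}$, and $a_{i+1} + b_{i+1} \geq a_i + b_i + 1$. Summing the telescoping gains shows the number of distinct $\pi^+$-endpoints plus the number of distinct $\pi^-$-endpoints is at least $t + 2 \geq \alpha + 1$. Because these parts are disjoint and each is nonempty, $|G_s| \geq \alpha + 1 \geq \alpha - 1$.

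The main (minor) obstacle is handling the merged last group correctly: it inflates the upper count to $2\alpha - 1$ rather than $\alpha$, so one has to write the $4\alpha$ bound in a way that absorbs this, while also checking that the lower-bound argument still applies when $E_{s'}$ has more than $\alpha$ edges — which it does, since more edges can only increase the count of distinct endpoint parts. Aside from that bookkeeping, the proof is a short combinatorial argument with no further technical content.
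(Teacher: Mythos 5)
Your upper bound is the same as the paper's: count the endpoints of $E_s$ (at most $2|E_s| \leq 4\alpha$ parts, since the merged last group has at most $2\alpha$ superedges) and multiply by the per-part bound $2\rho$. For the lower bound, you take a genuinely different route. The paper proves that the whole bipartite graph $(\pi^+ \cup \pi^-, \cP)$ is acyclic — via an argument about simple paths with strictly monotone index sequences — and then invokes the forest inequality $|V| \geq |E|-1$ on the subgraph $(G_s, E_s)$. You instead argue directly from the construction of Algorithm \ref{alg:pairing}: between consecutive superedges the pointer pair $(a,b)$ increases coordinatewise by $0$ or $1$ with total increase at least $1$, so over a block of $t+1$ consecutive superedges the number of distinct $\pi^+$ endpoints plus distinct $\pi^-$ endpoints is $(a_{\text{end}}-a_{\text{start}}+1)+(b_{\text{end}}-b_{\text{start}}+1) \geq t+2$. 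This is a cleaner, more local argument — it does not need acyclicity of the global graph, it exploits only the monotone pointer discipline of the pairing algorithm, and it actually yields the slightly stronger lower bound $|G_s| \geq \alpha+1$, which of course implies $\alpha-1$. (Essentially you are re-deriving the forest property of $(G_s,E_s)$ implicitly: your monotone advancement is exactly why the paper's graph is a caterpillar forest.) One small bookkeeping point: the merged last group can have exactly $2\alpha$ superedges, not $2\alpha-1$, when $|\cP|$ is a multiple of $\alpha$; your stated ``$<4\alpha$'' should be ``$\leq 4\alpha$'', but the final bound $8\rho\alpha$ is unaffected. Overall the proof is correct and the lower-bound argument is a legitimate alternative to the paper's.
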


{\bf Note:} In fact, it is not hard to see that $H$ is a forest where each component of it is a caterpillar (a tree in which
every vertex is either a leaf node or is adjacent to a ``stalk'' node; stalk nodes form a path). The parts $P$ that are split
between different groups $G_s$ are the high degree nodes in $H$ and these parts belong to the ``stalk'' of $H$.

\begin{definition}
For each group $G_s \in \cG$, say a part $P\in\pi$ is split by $G_s$ if $P \subseteq G_s$ and 
$N_{\cP}(P) \not\subseteq E_s$, where $N_{\cP}(P)$ are those parts that have a superedge to $P$.
\end{definition}
We simply say $P$ is split if the group $G_s$ is clear from the context. The following lemma is easy (see Appendix \ref{app:missingProofs}
for proof).

\begin{lemma}\label{lem:split}
For each $G_s \in \cG$, there are at most two parts $P$ split by $G_s$.
\end{lemma}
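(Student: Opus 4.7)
The plan is to exploit the one-dimensional, two-pointer structure of Algorithm \ref{alg:pairing}: the superedges $e_0, e_1, \ldots, e_{|\cP|-1}$ are produced in an order under which each part's incident superedges form a contiguous block of indices. Once this contiguity is established, being split by $G_s$ translates to the part's block of indices crossing a boundary of the window $E_s$, and I only need to check how many such crossings a single boundary can accommodate.

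First I would observe that the pointers $a$ and $b$ in Algorithm \ref{alg:pairing} are monotonically nondecreasing, so once $a$ moves past the index of some $P^+$ no later superedge can have $P^+$ as an endpoint, and similarly for $\pi^-$. Thus for every $P \in \pi^+ \cup \pi^-$, the set of indices $i$ with $P$ an endpoint of $e_i$ is an interval $[\ell_P, r_P]$. A part $P$ lies in $G_s$ iff this interval meets $E_s$, and $P$ is split by $G_s$ iff the interval also contains some index outside $E_s$, i.e., the interval crosses one of the boundaries of $E_s$.

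Next I would bound the number of parts whose interval straddles a single fixed boundary between consecutive indices $i$ and $i+1$. The pairing step of Algorithm \ref{alg:pairing} pairs outliers between $P_a^+$ and $P_b^-$ until no longer possible, so immediately after creating $e_i = (P_a^+, P_b^-)$ at least one of $P_a^+, P_b^-$ has no unpaired outlier left, forcing at least one of $a, b$ to strictly advance. Consequently at most one of the two endpoints of $e_i$ can also be an endpoint of $e_{i+1}$, so at most one part's interval contains both $i$ and $i+1$.

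Finally, I would conclude by counting boundaries: a non-last group $E_s$ has two boundaries in the edge ordering (just before $e_{\alpha s}$ and just after $e_{\alpha(s+1)-1}$), while $E_0$ and the merged last group each have only one. Combining with the previous step yields at most two split parts per $G_s$. I do not anticipate any real obstacle; the only minor case to check is that when both $a$ and $b$ advance after producing $e_i$ no part at all straddles that boundary, which only strengthens the bound.
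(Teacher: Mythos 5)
Your proof is correct and rests on the same structural facts the paper exploits: the two-pointer construction in Algorithm~\ref{alg:pairing} makes each part's incident superedges occupy a contiguous run of indices, and after each superedge $e_i$ at least one pointer strictly advances, so $e_i$ and $e_{i+1}$ share at most one endpoint. Where the two arguments diverge is in how they finish. The paper first argues that any split part must be an endpoint of one of the two boundary edges $e_i$ or $e_j$ of $E_s$ (giving up to four candidates) and then runs a short case analysis --- ``if both endpoints of $e_i$ are split, then\ldots'' --- to cut that down to two. You instead translate ``$P$ is split by $G_s$'' into ``$P$'s interval of superedge indices crosses a boundary of $E_s$,'' note that a given boundary between consecutive indices $i$ and $i+1$ is crossed by a part iff that part is an endpoint of both $e_i$ and $e_{i+1}$, and hence by at most one part, and finally count that $E_s$ has at most two boundaries (one for $E_0$ and for the merged last group). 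That repackaging of the same ingredients into a boundary-crossing count is genuinely cleaner: it avoids the casework, handles the first and last groups uniformly (they simply have one boundary rather than two), and makes it transparent why the bound is exactly two.
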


\subsection{Analyzing Algorithm \ref{alg:local}}
Now suppose we run Algorithm \ref{alg:local} using $\rho' = 4\alpha\rho$. For each $P \in \pi^0$, extend $\cG$ to include a ``simple'' group $G_s = P$ for each $P \in \pi^0$ where $s$ is the next unused index.
Any $P \in \pi^0$ is not split by any group.
Each $P \in \pi$ is then contained in at least one group of $\cG$ and $|G_s \cap \loc|, |G_s \cap \opt| \leq \rho'$ for each $G_s \in \mathcal G$ by Lemma \ref{lem:size}.

We are now ready to describe the swaps used in our analysis of Algorithm \ref{alg:local}. Simply, for $G_s \in \cG$ let $\loc_s$ be the centres in $G_s \cap \loc$ that are not in a part $P$
that is split and let $\opt_s$ simply be $G_s \cap \opt$. We consider the swap $\loc \rightarrow (\loc - \loc_s) \cup \opt_s$.

To analyze these swaps, we further classify each $j \in \cX^a \cap \cX^{a*}$ in one of four ways. This is the same classification from \cite{FRS16B}.
Label $j$ according to the first property that it satisfies below.
\begin{itemize}
\item {\bf lucky}: both $\sigma(j)$ and $\sigma^*(j)$ in the same part $P$ of $\pi$.
\item {\bf long}: $\delta(\sigma(j), \sigma^*(j)) > D_{\sigma(j)}/\epsilon$.
\item {\bf good}: either $D_{\sigma^*(j)} \leq \epsilon D_{\sigma(j)}$ or there is some $i' \in \opt$ with $\delta(\sigma^*(j), i') \leq \epsilon \cdot D_{\sigma^*(j)}$ and $(\sigma(j), i') \in \net$ where both $\sigma(j)$ and $i'$ lie in the same part.
\item {\bf bad}: $j$ is not lucky,  long, or good. Note, by Theorem \ref{thm:partition}, that 
$\Pr[j\;\text{ is bad}] \leq \epsilon$ over the random construction of $\pi$.
\end{itemize}
Finally, as a technicality for each $j \in \cX^a \cap \cX^{a^*}$ where $\sigma^*(j)$ lies in
a part that is split by some group and $j$ is either lucky or long, let $s(j)$ be any index such that group $G_{s(j)} \in \cG$
contains the part with $\sigma^*(j)$. The idea is that we will reassign $j$ to $\sigma^*(j)$ only when group $G_{s(j)}$ is processed.

Similarly, for any $j \in \cX^o$, let $s(j)$ be any index such that $\sigma^*(j), \sigma(\kappa(j)) \in G_{s(j)}$. The following lemma shows this is always possible.
\begin{lemma}\label{lem:outliers}
For each $j \in \cX^o$ there is at least one group $G_s \in \cG$ where $\sigma^*(j), \sigma(\kappa(j)) \in G_s$.
\end{lemma}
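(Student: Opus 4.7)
The plan is to trace how the pair $(j, \kappa(j))$ is created during the construction of $\kappa$ and, in each case, exhibit a group $G_s \in \cG$ that absorbs both $\sigma^*(j) \in \opt$ and $\sigma(\kappa(j)) \in \loc$. Recall $\kappa$ is built in two stages: (i) within each part $P \in \pi$, we arbitrarily match $\cX^{o^*}$-points whose $\sigma$-image lies in $P$ with $\cX^o$-points whose $\sigma^*$-image lies in $P$, until one side is exhausted; (ii) Algorithm~\ref{alg:pairing} then matches the remaining unpaired points across parts, generating the superedges in $\cP$.

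First I would handle the in-part case. If $j$ and $\kappa(j)$ were matched in stage (i), then $\sigma^*(j)$ and $\sigma(\kappa(j))$ both belong to the same part $P \in \pi$. Every such $P$ is contained in at least one group of $\cG$: parts with $\Delta_P = 0$ were explicitly added as their own simple group, while every $P \in \pi^+ \cup \pi^-$ is incident to at least one superedge of $\cP$ (some unpaired point of $P$ must be matched in stage (ii)), hence is contained in the group whose $E_s$ holds that superedge. Either way, the lemma is established for this case.

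Second, I would handle the cross-part case. The key observation here is that after stage (i) the only unpaired points of a $\pi^+$-part lie in $\cX^o$, while the only unpaired points of a $\pi^-$-part lie in $\cX^{o^*}$; this is immediate from the sign conventions of $\Delta_P$. Consequently, if $(j,\kappa(j))$ was matched along the superedge $(P^+_a, P^-_b)$ created in stage (ii), then necessarily $\sigma^*(j) \in P^+_a$ and $\sigma(\kappa(j)) \in P^-_b$. Choosing $s$ so that $(P^+_a, P^-_b) \in E_s$ and invoking the definition of $G_s$ — which by construction contains every centre of every part incident to a superedge in $E_s$ — gives $\sigma^*(j), \sigma(\kappa(j)) \in G_s$, as required.

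The main (essentially only) obstacle is the sign-tracking in the cross-part case, i.e. confirming that along each stage (ii) superedge the $\cX^o$-endpoint must sit on the $\pi^+$ side and the $\cX^{o^*}$-endpoint on the $\pi^-$ side. Once this is nailed down via the parity/sign observation about $\Delta_P$, no further calculation is needed and the lemma follows.
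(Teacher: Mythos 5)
Your proof is correct and takes essentially the same approach as the paper: split on whether $\sigma^*(j)$ and $\sigma(\kappa(j))$ lie in a common part (where containment in a group is immediate) or were paired across parts by Algorithm~\ref{alg:pairing} (where the resulting superedge $(P^+_a,P^-_b)$ places both parts in a common $G_s$). The extra sign-tracking and the observation that every $P \in \pi^+\cup\pi^-$ is incident to some superedge are correct elaborations of details the paper treats as immediate.
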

\begin{proof}
If $\sigma^*(j)$ and $\sigma(\kappa(j))$ lie in the same part $P$, this holds because each part is a subset of some group. Otherwise, $j$ is paired with $\kappa(j)$ at some point in Algorithm \ref{alg:pairing} and an edge $(P^+_a, P^-_b)$
is added to $\cP$ where $\sigma^*(j) \in P^+_a, \sigma(\kappa(j)) \in P^-_b$. The centres in both endpoints of this 
super edge were added to some group $G_s$.
\end{proof}

We now place a bound on the cost change in each swap. Recall $|G_s \cap \loc|, |G_s \cap \opt| \leq \rho'$ and $\loc$ is a local optimum, so
\begin{equation*}\label{eqn:lower}
0 \leq \cost((\loc \triangle \loc_s) \cup \opt_s) - \cost(\loc).
\end{equation*}
We describe a feasible reassignment of points to upper bound the cost change. This may cause some points in $\cX^a$ becoming outliers and other points in $\cX^o$ now being assigned.
We take care to ensure the number of points that are outliers in our reassignment is exactly $z$, as required.

Consider the following instructions describing one possible way to reassign a point $j \in \cX$ when processing $G_s$. This may not describe an optimal reassignment, but it places an upper bound on the cost change.
First we describe which points should be moved to $\sigma^*(j)$ if it becomes open.
Note that the points in $\cX^o\cup \cX^{o^*}$ are paired via $\kappa$ (and $\cX^o\cap \cX^{o^*}=\emptyset$).
Below we specify what to do for each point $j\in\cX^o$ and $\kappa(j)$ together.
\begin{itemize}
\item If $j \in \cX^o$ and $s = s(j)$, then make $\kappa(j)$ an outlier and connect $j$ to $\sigma^*(j)$, which is now open. The total assignment cost change for $j$ and $\kappa(j)$ is $c^*_j - c_{\kappa(j)}$.
\end{itemize}

Note that so far this reassignment still uses $z$ outliers because each new outlier $j'$ has its paired point $\kappa^{-1}(j')$ that used to be an outlier become connected. The rest of the analysis will not create any more outliers.
The rest of the cases are for when $j\in\cX^a\cap\cX^{a^*}$.

\begin{itemize}
\item If $j$ is lucky or long and $s = s(j)$, reassign $j$ from $\sigma(j)$ to $\sigma^*(j)$. The 
assignment cost change for $j$ is $c^*_j - c_j$.

\item If $j$ is long, move $j$ to the open centre that is nearest to $\sigma(j)$. By Lemma \ref{lem:cents} and because $j$ is long, the assignment cost increase for $j$ can be bounded as follows:
\[ 5 \cdot D_{\sigma(j)} \leq 5 \epsilon \cdot \delta(\sigma(j), \sigma^*(j)) \leq 5\epsilon \cdot (c^*_j + c_j). \]

\item If $j$ is good and $D_{\sigma^*(j)} \leq \epsilon \cdot D_{\sigma(j)}$, move $j$ to the open centre that is nearest to $\sigma^*(j)$. By \eqref{eqn:radius} in Section \ref{sec:recall} and Lemma \ref{lem:cents},
the assignment cost increase for $j$ can be bounded as follows:
\[c^*_j + 5 \cdot D_{\sigma^*(j)} - c_j \leq c^*_j + 5 \epsilon \cdot D_{\sigma(j)} - c_j 
\leq c^*_j + 5\epsilon\cdot (c^*_j + c_j)  - c_j 
 = (1+5\epsilon) \cdot c^*_j - (1-5\epsilon)\cdot c_j.\]

\item If $j$ is good but $D_{\sigma^*(j)} > \epsilon \cdot D_{\sigma(j)}$, then let $i'$ be such that $\sigma(j),i'$ both lie in $G_s$ and $\delta(\sigma^*(j), i') \leq \epsilon \cdot D_{\sigma^*(j)}$. Reassigning $j$ from to $i'$
bounds its assignment cost change by
\[ c^*_j + \delta(\sigma^*(j), i')  -c_j \leq c^*_j + \epsilon \cdot D_{\sigma^*(j)} - c_j \leq (1+\epsilon) \cdot c^*_j - (1-\epsilon) \cdot c_j. \]

\item Finally, if $j$ is bad then simply reassign $j$ to the open centre that is nearest to $\sigma(j)$. By \eqref{eqn:radius} and Lemma \ref{lem:cents}, the assignment cost for $j$ increases by at most
$5 \cdot D_{\sigma(j)} \leq 5 \cdot (c^*_j + c_j).$ 

This looks large, but its overall contribution to the final analysis will be scaled by an $\epsilon$-factor because a point is bad only with probability at most $\epsilon$ over the random sampling of $\pi$.
\end{itemize}

Note this accounts for all points $j$ where $\sigma(j)$ is closed. Every other point $j$ may stay assigned to $\sigma(j)$ to bound its assignment cost change by 0.

For $j \in \cX$, let $\Delta_j$ denote the total reassignment cost change over all swaps when moving $j$ as described above. This should not be confused with the previously-used notation $\Delta_P$ for a part $P \in \pi$.
We bound $\Delta_j$ on a case-by-case basis below.
\begin{itemize}
\item If $j \in \cX^{o}$ then the only time $j$ is moved is for the swap involving $G_{s(j)}$. So $\Delta_j = c^*_j$.
\item If $j \in \cX^{o^*}$ then the only time $j$ is moved is for the swap involving $G_{s(\kappa^{-1}(j))}$. So $\Delta_j = -c_j$.
\item If $j$ is lucky then it is only moved when $G_{s(j)}$ is processed so $\Delta_j = c^*_j - c_j$.
\item If $j$ is long then it is moved to $\sigma^*(j)$ when $G_{s(j)}$ is processed and it is moved near $\sigma(j)$ when $\sigma(j)$ is closed, so
$\Delta_j \leq c^*_j - c_j + 5\epsilon \cdot (c^*_j + c_j) = (1+5\epsilon) \cdot c^*_j - (1-5\epsilon) \cdot c_j.$
\item If $j$ is good then it is only moved when $\sigma(j)$ is closed so $\Delta_j \leq (1+5\epsilon) \cdot c^*_j - (1-5\epsilon) \cdot c_j$.
\item If $j$ is bad then it is only moved when $\sigma(j)$ is closed so $\Delta_j \leq 5 \cdot (c^*_j + c_j)$.
\end{itemize}

To handle the centre opening cost change, we use the following fact.
\begin{lemma}\label{lem:facs}
$\sum_{G_s \in \cG} |\opt_s| - |\loc_s| \leq (1+2\epsilon) \cdot |\opt| - (1-2\epsilon) \cdot |\loc|$
\end{lemma}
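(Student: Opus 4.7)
The plan is to expand $\sum_{G_s \in \cG}(|\opt_s| - |\loc_s|)$ by counting contributions part-by-part. For each part $P \in \pi$, let $m(P)$ denote the number of groups $G_s \in \cG$ with $P \subseteq G_s$. Each unsplit part has $m(P)=1$ and contributes $|P\cap\opt|$ to $\sum_s|\opt_s|$ and $|P\cap\loc|$ to $\sum_s|\loc_s|$; each split part contributes $m(P)\cdot|P\cap\opt|$ to the former and $0$ to the latter (by definition of $\loc_s$). Rearranging,
\[
\sum_{s}\bigl(|\opt_s|-|\loc_s|\bigr) \;=\; |\opt|-|\loc| \;+\; \sum_{P\text{ split}}\Bigl[|P\cap\loc| + (m(P)-1)\,|P\cap\opt|\Bigr].
\]
So it suffices to show the final ``overcount'' sum is at most $2\epsilon(|\opt|+|\loc|)$.

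Next I would bound the overcount using Lemma \ref{lem:split} and the sizes of the groups. Only parts in $\pi^+\cup\pi^-$ can be split (simple groups from $\pi^0$ have no splits), and by Lemma \ref{lem:split} each non-simple $G_s$ contains at most $2$ split parts. The number of non-simple groups is $s'+1 \leq \lceil|\cP|/\alpha\rceil \leq |\cP|/\alpha+1$, so
\[
\sum_{P\text{ split}} m(P) \;\leq\; 2\bigl(|\cP|/\alpha+1\bigr).
\]
Since every split $P$ has $m(P)\geq 2$, the number of split parts is at most $\tfrac12\sum_{P\text{ split}}m(P)$. Combining this with the uniform bound $|P\cap\loc|,|P\cap\opt|\leq\rho$ yields
\[
\sum_{P\text{ split}}\bigl[|P\cap\loc| + (m(P)-1)|P\cap\opt|\bigr] \;\leq\; 3\rho\bigl(|\cP|/\alpha+1\bigr).
\]

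Finally I would convert this into the target $\epsilon$-error. Each iteration of Algorithm \ref{alg:pairing} adds exactly one superedge while advancing $a$ or $b$, hence $|\cP|\leq|\pi^+|+|\pi^-|\leq|\opt|+|\loc|$ because every part of $\pi$ is non-empty. Substituting $\alpha = 4\rho/\epsilon$ gives $3\rho/\alpha = 3\epsilon/4$, and the standing assumption $|\cP|>\alpha$ lets me absorb the additive constant: $3\rho \leq (3\rho/\alpha)|\cP| = (3\epsilon/4)|\cP|$. Therefore the overcount is at most $(3\epsilon/2)|\cP|\leq(3\epsilon/2)(|\opt|+|\loc|)\leq 2\epsilon(|\opt|+|\loc|)$, which rearranges to the claimed bound $(1+2\epsilon)|\opt|-(1-2\epsilon)|\loc|$.

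The main obstacle is purely bookkeeping rather than any deep inequality: one must be careful to separate simple groups from non-simple ones (only the latter can carry split parts), avoid double-counting when a single split part sits in many groups, and invoke the assumption $|\cP|>\alpha$ exactly once (as flagged in the footnote following the grouping construction) to absorb the $+1$ from $\lceil|\cP|/\alpha\rceil$ into a multiplicative $\epsilon$-error rather than leaving it as a loose additive constant.
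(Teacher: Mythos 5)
Your proof is correct, and it takes a genuinely different route from the paper's, though both lean on the same structural facts (Lemma~\ref{lem:split} and the choice $\alpha = 4\rho/\epsilon$). The paper works group-by-group: it defines, for each $G_s$, the ``core'' $P_s$ (union of unsplit parts in $G_s$) and the ``boundary'' $\overline P_s = G_s - P_s$, bounds $|\overline P_s| = O(\rho) \leq \epsilon |P_s|$ using the lower bound $|P_s| \geq \alpha - 3$ from Lemmas~\ref{lem:size} and~\ref{lem:split}, and sums the per-group inequalities $|\opt_s| - |\loc_s| \leq |P_s \cap \opt| - |G_s \cap \loc| + 2|\overline P_s|$ using disjointness of the cores. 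You instead work part-by-part: you derive the \emph{exact} identity $\sum_s(|\opt_s|-|\loc_s|) = |\opt| - |\loc| + \sum_{P\text{ split}}[|P\cap\loc| + (m(P)-1)|P\cap\opt|]$ via the multiplicity $m(P)$, and bound the overcount globally by counting split-part occurrences across groups ($\leq 2\cdot\#\text{groups} \leq 2|\cP|/\alpha$) and using $|\cP| \leq |\pi| \leq |\opt|+|\loc|$. Your accounting is dual to the paper's (you bound the number of groups from above where the paper bounds the core size from below; the two are connected by $\sum_s|P_s| \geq (\alpha-3)\cdot\#\text{groups}$), but your version isolates the error as a clean additive overcount rather than distributing a multiplicative $\epsilon$ over each group, and it avoids invoking Lemma~\ref{lem:size} at all. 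Both yield the claimed $(1+2\epsilon)|\opt| - (1-2\epsilon)|\loc|$ bound.
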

\begin{proof}
For each $G_s \in \cG$, let $P_s$ be the union of all parts used to form $G_s$ that are not split by $G_s$.
Let $\overline P_s = G_s - P_s$, these are centres in $G_s$ that lie in a part split by $G_s$.

Now, $|\overline P_s| \leq 2\rho$ because at most two parts are split by $G_s$ by Lemma \ref{lem:split}.
On the other hand, by Lemmas \ref{lem:size} and \ref{lem:split} there are at least $\alpha-3$ parts that were used to form $G_s$ that were not split by $G_s$. As each part contains at least one centre, then $|P_s| \geq \alpha-3$.
Thus, for small enough $\epsilon$ we have
\[ |\overline P_s| \leq 2\rho \leq \epsilon \cdot (\alpha-3) \leq \epsilon |P_s|. \]

Note $\sum_{G_s \in \cG} |P_s| \leq |\loc| + |\opt|$ because no centre appears in more than one set of the form $P_s$. Also note $|\opt_s| \leq |P_s \cap \opt| + |\overline P_s|$ and $|\loc_s| \geq |G_s \cap \loc| - |\overline P_s|$,
\[ \sum_{G_s \in \cG} |\opt_s| - |\loc_s| \leq \sum_{G_s \in \cG} |P_s \cap \opt| - |G_s \cap \loc| + 2\epsilon |P_s| \leq |\opt| - |\loc| + 2\epsilon \cdot (|\opt| + |\loc|). \]

\end{proof}

Putting this all together,
\begin{alignat*}{2}
0 \quad & \leq \quad \sum_{G_s \in \cG} \cost((\loc - \loc_s) \cup \opt_s) - \cost(\loc) \\
& \leq \quad \sum_{j \in \cX} \Delta_j + \sum_{G_s \in \cG} |\opt_s| - |\loc_s| \\
& \leq \quad \sum_{\begin{array}{cc}
j \in \cX^{a} \cap \cX^{a^*} \\
j\text{ is not bad}
\end{array}} \left[(1+5\epsilon) \cdot c^*_j - (1-5\epsilon) c_j\right] +  \sum_{j\text{ bad}}5(c^*_j + c_j)+ \\
 &\qquad \sum_{j \in \cX^o} c^*_j - \sum_{j \in \cX^{o^*}} c_j + (1+2\epsilon) \cdot |\opt| - (1-2\epsilon) \cdot |\loc|.
\end{alignat*}

This holds for any $\pi$ supported by the partitioning scheme described in Theorem \ref{thm:partition}. Taking expectations over the random choice of $\pi$
and recalling $\Pr[j\text{is bad}] \leq \epsilon$ for any $j \in \cX^a \cap \cX^{a^*}$,
\[ 0 \leq \sum_{j \in \cX} \left[(1+10\epsilon) \cdot c^*_j - (1-10\epsilon) \cdot c_j\right] + (1+2\epsilon) \cdot |\opt| - (1-2\epsilon) \cdot |\loc|. \]
Rearranging and relaxing slightly further shows
\[ (1-10\epsilon) \cdot \cost(\loc) \leq (1+10\epsilon) \cdot \cost(\opt). \]
Ultimately, $\cost(\loc) \leq (1+30\epsilon) \cdot OPT.$


\section{$k$-Median and $k$-Means with Outliers}\label{sec:kmeanso}
In this section we show how the results of the previous section can be extended to get a bicriteria
approximation scheme for \kmedo and \kmeanso with outliers in doubling metrics (Theorem \ref{theo:kmeanso}). 
For ease of exposition we present the
result for \kmedo. More specifically, given a set $\cX$ of points, set $\fa$ of possible centers, positive
integers $k$ as the number of clusters and $z$ as the number of outliers for \kmed,
we show that a $\rho'$-swap local search (for some $\rho'=\rho(\epsilon,d)$
to be specified) returns a solution of cost at most $(1+\epsilon)OPT$ using at most $(1+\epsilon)k$ centres (clusters)
and has at most $z$ outliers.
Note that a local optimum $\loc$ satisfies $|\loc| = (1+\epsilon) \cdot k$ unless $\cost(\loc) = 0$ already, in which case our analysis is already done.
Extension of the result ot \kmeans or in general to a clustering where the
distance metric is the $\ell^q_q$-norm is fairly easy and is discussed in the next section where we also show how
we can prove the same result for shortest path metric of minor-closed families of graphs.

\begin{algorithm*}[t]
 \caption{\kmed $\rho'$-Swap Local Search} \label{alg2:local}
\begin{algorithmic}
\State Let $\loc$ be an arbitrary set of $(1+\epsilon)k$ centres from $\fa$
\While{$\exists$ sets $P\subseteq\fa-\loc$, $Q\subseteq \loc$ with $|P|,|Q|\leq \rho'$ s.t. 
$\cost((\loc - Q) \cup P)<\cost(\loc)$ and $|(\loc - Q)\cup P|\leq (1+\epsilon)k$}
\State $\loc\leftarrow (\loc - Q) \cup P$
\EndWhile
\State \Return $\loc$
\end{algorithmic}
\end{algorithm*}


The proof uses ideas from both \cite{FRS16B} for the PTAS for \kmeans as well as the results of the previous section
for \uuflo. Let $\loc$ be a local optimum solution returned by Algorithm \ref{alg2:local} 
that uses at most $(1+\epsilon)k$ clusters and has at most $z$ outliers and 
let $\opt$ be a global optimum solution to \kmedo with $k$ clusters and $z$ outliers.
Again, we use $\sigma:\cX\rightarrow \loc\cup\{\bot\}$ and $\sigma^*:\cX\rightarrow\opt\cup\{\bot\}$ as the assignment of points
to cluster centres, where $\sigma(j)=\bot$ (or $\sigma^*(j)=\bot$) indicates an outlier.
For $j\in\cX$ we use $c_j=\delta(j,\loc)$ and $c^*_j=\delta(j,\opt)$. The special pairing $\cents\subseteq \loc\times\opt$ and
Lemma \ref{lem:cents} as well as the notion of nets $\net\subseteq \loc\cup \opt$ are still used in our setting.
A key component of our proof is the following slightly modified version of Theorem 4 in \cite{FRS16B}:

\begin{theorem}\label{theo:newpartition}
For any $\epsilon >0$, there is a constant $\rho=\rho(\epsilon,d)$ and a randomized algorithm that samples a partitioning
$\pi$ of $\loc\cup\opt$ such that:
\begin{itemize}
\item For each part $P\in\pi$, $|P\cap \opt|<|P\cap\loc|\leq \rho$ 
\item For each part $P\in\pi$, $S\Delta P$ contains at least one centre from every pair in $\cents$
\item For each $(i^*,i)\in\net$,  $\Pr[i, i^* {\rm ~lie~in~different~parts~of~}\pi] \leq \eps$.
\end{itemize}
\end{theorem}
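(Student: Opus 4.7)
The plan is to adapt the partitioning scheme from Theorem~\ref{thm:partition} (Theorem 4 in \cite{FRS16B}) so that the equality $|P \cap \opt| = |P \cap \loc|$ enforced in the original construction is replaced by the strict inequality $|P \cap \opt| < |P \cap \loc|$. The second and third bullets essentially carry over from \cite{FRS16B} without modification; the real task is producing the first bullet. The new ingredient that makes strict inequality possible is that in our setting $|\loc| = (1+\epsilon)k$ while $|\opt| = k$, so there is a surplus of $\epsilon k$ centres in $\loc$ that can be used to tip every part in favour of $\loc$.

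First I would set up a partial pairing of size $k$ that matches every $i^* \in \opt$ to some $i \in \loc$ by the same construction used for Lemma~\ref{lem:cents} (which is one-to-one and still goes through when one side is larger). Write $\loc' \subseteq \loc$ for the matched centres and $\loc'' = \loc \setminus \loc'$ for the $\epsilon k$ surplus centres. I would then run the partitioning algorithm of \cite{FRS16B} on the balanced collection $\loc' \cup \opt$ to obtain a random partition $\pi'$ satisfying the three conclusions of Theorem~\ref{thm:partition}, including $|P \cap \loc'| = |P \cap \opt| \leq \rho_0$ for every $P \in \pi'$, where $\rho_0 = \rho_0(\epsilon, d)$ is the constant produced by \cite{FRS16B}.

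The key step is to distribute the surplus $\loc''$ so that every part ends up with at least one extra loc centre, thereby turning balance into strict dominance. Because $\pi'$ could contain as many as $\Theta(k)$ parts when its parts are as small as possible, this may exceed the budget $|\loc''| = \epsilon k$, so I would first coarsen $\pi'$ by greedily merging groups of at most $O(1/\epsilon)$ consecutive parts along a traversal of the hierarchical decomposition used by \cite{FRS16B} until the number of surviving blocks is at most $\epsilon k$. This keeps each merged block of size at most $\rho = O(\rho_0/\epsilon)$, still a function only of $\epsilon$ and $d$. Merging preserves the second property (a union of parts each containing a centre from every pair in $\cents$ still contains such a centre) and only improves the third (merging cannot separate two centres that were already together). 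Finally, assigning each $i \in \loc''$ to a distinct block gives $|P \cap \loc| \geq |P \cap \loc'| + 1 > |P \cap \opt|$ for every final part.

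The main obstacle is carrying out the coarsening and redistribution while keeping $\rho$ bounded by a function of $\epsilon$ and $d$ alone and without damaging the separation guarantee in the third bullet. The randomness must be baked into $\pi'$ from the outset, so the merging rule needs to be deterministic given $\pi'$; this works because merging never splits an already-together pair, so the probability bound on $(i^*, i) \in \net$ separating cannot increase. A clean accounting of block sizes using the surplus $\epsilon k$ then pins down $\rho(\epsilon, d) = O(\rho_0(\epsilon, d)/\epsilon)$, completing the proof.
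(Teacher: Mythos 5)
Your construction has a genuine gap concerning the third bullet. You run the randomized partitioning of \cite{FRS16B} on the reduced ground set $\loc' \cup \opt$ and then assign the surplus centres $\loc''$ to blocks deterministically afterward. But the set $\net$ is defined on the \emph{full} $\loc \cup \opt$ --- it includes pairs $(i^*, i)$ with $i \in \loc''$ --- and the theorem requires $\Pr[i,i^* \text{ separated}] \le \eps$ for all such pairs. Since $\loc''$ centres are never part of the random process and your assignment rule is ``assign each $i \in \loc''$ to a distinct block'' with no reference to $\net$, there is nothing that makes $i$ likely to land near its $\net$-partner $i^*$; the separation probability for these pairs could be arbitrarily close to $1$. (A related, smaller issue: you treat $\cents$ as a perfect matching of $\opt$ into $\loc$, which lets you define $\loc'$ as its image. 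The paper only guarantees that $\cents$ is \emph{some} matching, so $\loc'$ as you describe it is not well-defined, and you would also have to check that no pair of $\cents$ has a $\loc''$-endpoint, or bullet 2 can fail for the block to which that surplus centre is later assigned.)

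The paper takes the opposite tack, which sidesteps both problems: keep the randomized (pre-balancing) partition of the \emph{entire} $\loc\cup\opt$, so bullets 2 and 3 already hold for every centre. Then merge small parts so each has size at least $1/\epsilon$ (hence there are $O(\epsilon k)$ parts), add one ``dummy'' $\opt$-centre per part, run the original balancing step (which only merges, hence preserves bullets 2 and 3), and finally delete the dummies. Since each balanced part had equal $\loc$- and $\opt$-counts and each contained at least one dummy, removing the dummies leaves strictly more $\loc$-centres than $\opt$-centres in each part. The crucial structural difference from your proposal is that the dummies are \emph{artificial} points used only as bookkeeping to drive the balancing; no real centre is ever removed from the randomized ground set, so the probabilistic guarantee on $\net$ is never at risk.
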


The only difference of this version and the one in \cite{FRS16B} is that in Theorem 4 in \cite{FRS16B}, for the first
condition we have $|P\cap\opt|=|P\cap\loc|\leq \rho$. The theorem was proved by showing a randomized partitioning that satisfies
conditions 2 and 3. To satisfy the 1st condition $|P\cap\opt|=|P\cap\loc|\leq \rho$ a balancing step was performed
at the end of the proof that would merge several (but constant) number of parts to get parts with equal number of 
centres from $\loc$ and $\opt$ satisfying the two other conditions.
Here, since $|\loc|=(1+\epsilon)k$ and $|\opt|=k$, we can modify the balancing step to 
make sure that each part has at least one more centre from $\loc$ than from $\opt$. We present a proof of this theorem
for the sake of completeness in Appendix \ref{app:newpartition}.

We define $\cX^a,\cX^o,\cX^{a^*},$ and $\cX^{o^*}$ as in the previoius section for \uuflo. 
Note that $|\cX^o|=|\cX^{o^*}|=z$.
As before, we assume $\cX^o\cap \cX^{o^*}=\emptyset$ and $\loc\cap\opt=\emptyset$. Let $\pi$ be a partitioning as in
Theorem \ref{theo:newpartition}. Recall that for each $P\in\pi$, 
$\Delta_P := |\{j \in \cX^{o} : \sigma^*(j) \in P\}| - |\{j \in \cX^{o^*} : \sigma(j) \in P\}|$; we define $\pi^+,\pi^-,\pi^0$
to be the parts with positive, negative, and zero $\Delta_P$ values. We define the bijection $\kappa:\cX^o\rightarrow\cX^{o^*}$
as before: for each $P\in\pi$, we pair up points (via $\kappa$)
in $\{j \in \cX^{o^*} : \sigma(j) \in P\}$ and $\{j \in \cX^o : \sigma^*(j)\} \in P$ arbitrarily.
Then (after this is done for each $P$) we begin pairing unpaired points in $\cX^{o} \cup \cX^{o^*}$ between groups using
Algorithm  \ref{alg:pairing}. The only slight change w.r.t. the previous section is in the grouping the super-edges of the
bipartite graph defined over $\pi^+\cup\pi^-$: 
we use parameter $\alpha=2\rho+3$ instead. Lemmas \ref{lem:size} and \ref{lem:split} still hold.

Suppose we run Algorithm \ref{alg2:local} with $\rho'=\alpha\rho$. As in the case of \uuflo, we add each $P\in\pi^0$ as a separate
group to $\cG$ and so each $P\in\pi$ is now contained in at least one group $G_s\in\cG$ and 
$|G_s\cap \loc|,|G_s\cap\opt|\leq\rho'$. For the points $j\in\cX^a\cap\cX^{a^*}$ (i.e. those that are not outliner in neither
$\loc$ nor $\opt$) the analysis is pretty much the same as the PTAS for standard \kmeans (without outliers) in \cite{FRS16B}.
We need extra care to handle the points $j$ that are outliers in one of $\loc$ or $\opt$ (recall that $\cX^o\cap\cX^{o^*}=\emptyset$.

As in the analysis of \uuflo, for $G_s\in\cG$, let $\loc_s$ be the centers in $G_s\cap\loc$ that 
are not in a part $P$ that is split, and $\opt_s=G_s\cap\opt$; consider the test swap that replaces $\loc$ with 
$(\loc-\loc_s)\cup\opt_s$. To see this is a valid swap considered by our algorithm,
note that the size of a test swap is at most $\rho'$. Furthremore,
there are at least $\alpha-3$ unsplit parts $P$ and at most two split parts in each $G_s$, and each unsplit part
has at least one more centre from $\loc$ than $\opt$ (condition 1 of Theorem \ref{theo:newpartition}); 
therefore, there are at least $\alpha-3$ more centres that are swapped out
in $\loc_s$ and these can account for the at most $2\rho$ centres of $\loc\cup\opt$ in the split parts that are not swapped out
(note that $\alpha-3=2\rho$).
Thus, the total number of centres of $\loc$ and $\opt$ after the test swap is still at most $(1+\epsilon)k$ and $k$, respectively.

We classify each $j\in\cX^a\cap\cX^{a^*}$ to {\bf lucky}, {\bf long}, {\bf good}, and {\bf bad} in
the same way as in the case of \uuflo. 
Furthermore, $s(j)$ is defined in the same manner: for each $j \in \cX^a \cap \cX^{a^*}$ where $\sigma^*(j)$ lies 
a part that is split by some group and $j$ is either lucky or long, let $s(j)$ be any index such that group $G_{s(j)} \in \cG$
contains the part with $\sigma^*(j)$. 
Similarly, for any $j \in \cX^o$ let $s(j)$ be any index such that $\sigma^*(j), \sigma(\kappa(j)) \in G_{s(j)}$
(note that since $\cX^o\cap\cX^{o^*}=\emptyset$, if $j\in\cX^o$ then $j\in\cX^{a^*}$).
Lemma \ref{lem:outliers} still holds.
For each $G_s\in\cG$, since $|G_s\cap\loc|,|G_s\cap\opt|\leq\rho'$ and $\loc$ is a local optimum, any test swap based on 
a group $G_s$ is not improving, hence $0 \leq \cost((\loc \triangle \loc_s) \cup \opt_s) - \cost(\loc)$.

For each test swap $G_s$, 
we describe how we could re-assign the each point $j$ for which $\sigma(j)$ becomes closed or 
and bound the cost of each re-assignment depending 
on the type of $j$. This case analysis is essentially the same as the one we had for \uuflo so it is deferred to Appendix \ref{app:kmed}.
Note that the points in $\cX^o\cup \cX^{o^*}$ are paired via $\kappa$. 


\section{Extensions}\label{sec:extensions}


\subsection{Extension to $\ell^q_q$-Norms}\label{sec:lp}
In this section we show how the results for \uuflo and \kmedo can be extended to the setting
where distances are $\ell^q_q$-norm for any $q\geq 1$. For example, if we have $\ell^2_2$-norm
distances instead of $\ell^1_1$ we get \kmeans (instead of \kmed). Let us consider modifying the analysis of \kmedo
to the $\ell^q_q$-norm distances.
In this setting for any local and global solutions $\loc$ and $\opt$, let $\delta_j=\delta(j,\sigma(j))$ and 
$\delta^*_j=\delta(j,\sigma^*(j))$; then $c_j=\delta^q_j$ and $c^*_j=\delta^{*q}_j$. 
It is easy to see that throughout the analysis of \uuflo and \kmedo, for the cases
that $j\in\cX^a\cap\cX^{a^*}$ is lucky, long, or good (but not bad) 
when we consider a test swap for a group $P$
we can bound the distance from $j$ to some point in $\loc \triangle P$ by first moving it to either $\sigma(j)$ or $\sigma^*(j)$
and then moving it a distance of $O(\eps) \cdot (\del_j + \del^*_j)$ to reach an open centre. Considering 
that we reassigned $j$ from $\sigma(j)$, the reassignment cost will be at most 

\[ (\del_j + O(\eps) \cdot (\del_j + \del^*_j))^q - c_j = O(2^q \eps) \cdot (c_j + c^*_j)\]
or
\[ (\del^*_j + O(\eps) \cdot (\del_j + \del^*_j))^q - c_j = (1 + O(2^q\eps)) \cdot c^*_j - (1 - O(2^q\eps)) \cdot c_j. \]

For the case that $j$ is bad, the reassignment cost can be bounded by $O(2^q(c_j+c^*_j))$ but since the probability of being
bad is still bounded by $\epsilon$, the bound in Equation (\ref{eqn:finalDelta}) can be re-written as:

\begin{eqnarray}
0 \leq \quad \sum_{\begin{array}{cc}
j \in \cX^{a} \cap \cX^{a^*} \\
j\text{ is not bad}
\end{array}} \left[(1+O(2^q \epsilon)) \cdot c^*_j - (1-O(2^q\epsilon)) c_j\right]
 + \sum_{j\text{ bad}}O(2^q (c^*_j + c_j)) + \sum_{j \in \cX^o} c^*_j - \sum_{j \in \cX^{o^*}} c_j,
\end{eqnarray}
which would imply 
\[ 0 \leq \sum_{j \in \cX} \left[(1+O(2^q\epsilon)) \cdot c^*_j - (1-O(2^q\epsilon)) \cdot c_j\right], \]
and  $\cost(\loc) \leq (1+O(2^q\epsilon)) \cdot OPT.$ It is enought to choose $\epsilon$ sufficiently small compared to $q$
to obtain a $(1+\epsilon')$-approximation.
Similar argument show that for the case of \uuflo with $\ell^q_q$-norm distances, we can get a PTAS.


\subsection{Minor-Closed Families of Graphs}

We consider the problem \kmedo in families of graphs that exclude a fixed minor $H$. Recall that a family of graphs is
 closed under minors
if and only if all graphs in that family exclude some fixed minor. 

Let $G = (V,E)$ be an edge-weighted graph excluding $H$ as a minor where $\cX, \fa \subseteq V$ and let $\delta$ denote the shortest-path metric of $G$.
We will argue Algorithm \ref{alg2:local} for some appropriate constant $\rho' := \rho'(\epsilon, H)$ returns a set 
$\loc \subseteq \fa$ with $|\loc| = (1+\epsilon) \cdot k$
where $\cost(\loc) \leq (1+\epsilon) \cdot OPT$. This can be readily adapted to \kmeanso using ideas from Section 
\ref{sec:lp}. We focus on \kmedo for simplicity. This will
also complete the proof of Theorem \ref{theo:kmeanso} for minor-closed metrics. The proof of Theorem \ref{theo:uflo} 
for minor-closed metrics is proven similarly and is slightly simpler.

We use the same notation as our analysis for \kmedo in doubling metrics. Namely, $\loc \subseteq \fa$ is a local optimum 
solution, $\opt \subseteq \fa$ is a global optimum solution, $\chi^a$ are the points
assigned in the local optimum solution, $\chi^o$ are the outliers in the local optimum solution, etc. We assume
 $\loc \cap \opt = \emptyset$ (one can ``duplicate'' a vertex by adding a copy with a 0-cost edge to the original
and preserve the property of excluding $H$ as a minor) and $\chi^o \cap \chi^{o^*} = \emptyset$.

A key difference is that we do not start with a perfect partitioning of $\loc \cup \opt$, as we did with doubling metrics.
 Rather, we start with the $r$-divisions described in \cite{AKM16B} which provides ``regions'' which consist
of subsets of $\loc \cup \opt$ with limited overlap.
We present a brief summary, without proof, of their partitioning scheme and how it is used to analyze the multiswap 
heuristic for \ufl.
Note that their setting is slightly different in that they show local search provides a true PTAS for \kmed and \kmeans,
whereas we are demonstrating a bicriteria PTAS for \kmedo and \kmeanso. It is much easier to describe a PTAS using their 
framework if $(1+\epsilon)\cdot k$ centres are opened in the algorithm.
Also, as the locality gap examples in the next section show, Algorithm \ref{alg2:local} may not be a good approximation 
when using solutions $\loc$ of size exactly $k$.

First, the nodes in $V$ are partitioned according to their nearest centre in $\loc \cup \opt$, breaking ties in a consistent manner. Each part (i.e. Voronoi cell) is then a connected component so each can be contracted to get a graph $G'$ with vertices 
$\loc \cup \opt$. Note $G'$ also excludes $H$ as a minor.
Then for $r = d_H/\epsilon^2$ where $d_H$ is a constant depending only on $H$, they consider an $r$-division of $G'$. 
Namely, they consider ``regions'' $R_1, \ldots, R_m \subseteq \loc \cup \opt$ with the following properties
(Definition III.1.1 in \cite{AKM16B})
First, define the boundary $\partial(R_a)$ for each region to be all centres $i \in R_a$ incident to an edge $(i,i')$ 
of $G'$ with $i' \not\in R_a$.
\begin{itemize}
\item Each edge of $G'$ has both endpoints in exactly one region.
\item There are at most $c_H/r \cdot (|\loc| + |\opt|)$ regions where $c_H$ is a constant depending only on $H$.
\item Each region has at most $r$ vertices.
\item $\sum_{a=1}^m |\partial(R_a)| \leq \epsilon \cdot (|\loc| + |\opt|)$.
\end{itemize}
In general, the regions are not vertex-disjoint.

For each region $R_a$, the test swap $\loc \rightarrow \loc - ((R_a - \partial(R_a)) \cap \loc) \cup (R_a \cap \opt)$ is considered. Each $j$ with $\sigma(j) \in R_a-\partial(R_a)$ is moved in one of two ways:
\begin{itemize}
\item If $\sigma^*(j) \in R_a$, move $j$ to $\sigma^*(j)$ for a cost change of $c^*_j-c_j$.
\item Otherwise, if point $j$ is in the Voronoi cell for some $i \in R_a$ then $\del(j, \partial(R_a)) \leq c^*_j$ because the shortest path from $j$ to $\sigma^*(j)$ must include a vertex $v$ in the Voronoi cell of some $i \in \partial(R_a)$.
By definition of the Voronoi partitioning, $i$ is closer to $v$ than $\sigma^*(j)$. So the cost change for $j$ is at most $c^*_j - c_j$ again.
\item Finally, if point $j$ does not lie in the Voronoi cell for any $i \in R_a \cup \partial(R_a)$ then $\del(j, \partial(R_a)) \leq c_j$ because the shortest path from $j$ to $\sigma(j)$ again crosses the boundary of $R_a$. So the cost change for $j$ is at most 0.
\end{itemize}
Lastly, for each $j \in \cX$ if no bound of the form $c^*_j - c_j$ is generated for $j$ according to the above rules, then $j$ should move from $\sigma(j)$ to $\sigma^*(j)$ in some swap that opens $\sigma^*(j)$.

We use this approach as our starting point for \kmedo. Let $\epsilon' > 0$ be a constant such that we run Algorithm \ref{alg2:local} using $(1+\epsilon') \cdot k$ centres in $\loc$. We will fix the constant $\rho'$ dictating the size
of the neighbourhood to be searched momentarily.

Form the same graph $G'$ obtained by contracting the Voronoi diagram of $G'$, 
let $R_1, \ldots, R_m$ be the subsets with the same properties as listed above for $\epsilon := \epsilon'/10$.
The following can be proven in a similar manner to Theorem \ref{theo:newpartition}. Its proof is deferred to Appendix \ref{app:newpartition2}.
\begin{lemma}\label{lem:newpartition2}
We can find regions $R'_1, \ldots, R'_\ell$ such that the following hold.
\begin{itemize}
\item Each edge lies in exactly one region.
\item $|R'_a| \leq O(r)$ for each $1 \leq a \leq \ell$.
\item $|(R'_a-\partial(R'_a)) \cap \loc| > |R'_a \cap \opt|$.
\end{itemize}
\end{lemma}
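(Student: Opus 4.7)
The plan is to mirror the balancing step from the proof of Theorem \ref{theo:newpartition} (see Appendix \ref{app:newpartition}), now adapted to the $r$-division produced by \cite{AKM16B}. I would first invoke the $r$-division on $G'$ with a small boundary budget $\epsilon_0 := \epsilon/100$ and a size parameter $r_0 := d_H/\epsilon_0^2$, obtaining regions $R_1,\ldots,R_m$ that already satisfy the first two conclusions. A merging step will then produce the final regions $R'_1,\ldots,R'_\ell$ satisfying the third conclusion.

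Set $L_a := |(R_a-\partial(R_a))\cap\loc|$ and $O_a := |R_a\cap\opt|$. A global counting argument shows an ample surplus of interior locals across all regions. Every interior vertex lies in exactly one region so $\sum_a L_a \geq |\loc| - \sum_a |\partial(R_a)|$, while every vertex of $\opt$ lies in at least one region so $\sum_a O_a \leq |\opt| + \sum_a |\partial(R_a)|$. Substituting $|\loc|=(1+\epsilon)k$, $|\opt|=k$, and $\sum_a|\partial(R_a)|\leq \epsilon_0(|\loc|+|\opt|)$ gives
\[
\sum_{a=1}^m (L_a-O_a) \;\geq\; \epsilon k - 2\epsilon_0(2+\epsilon)k \;\geq\; (\epsilon/2)\,k,
\]
while $m \leq c_H(|\loc|+|\opt|)/r_0 = O(\epsilon^2 k)$. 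The average surplus per region is therefore $\Omega(1/\epsilon)$, so on average there is far more surplus available than the one unit each deficient region needs.

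Next, I would handle the deficient regions (those with $L_a \leq O_a$) by merging in the region adjacency graph $H_R$ whose nodes are the regions and whose edges join two regions that share a boundary vertex. Pick any deficient region, BFS outward in $H_R$ until a surplus region is reached, and merge every region along the BFS path into a single new region. Merging trivially preserves that each edge of $G'$ lies in exactly one region, and any boundary vertex shared between merged regions becomes interior in the merged region, which only helps the surplus condition. Iterating until no deficient region remains yields the $R'_a$, and the third conclusion then holds by termination.

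The main obstacle is bounding $|R'_a|$ by $O(r)$. Here I would use a token-passing argument: deposit one token on each surplus region for every unit of its surplus, then route each token to a distinct deficient region along a short path in $H_R$, having each token cancel one unit of deficit. Since the total surplus is $\Omega(\epsilon k)$ while $|V(H_R)| = O(\epsilon^2 k)$ and $H_R$ inherits a bounded average degree from $G'$'s exclusion of $H$ as a minor, an averaging argument bounds every routing path by $O(1/\epsilon)$. Each merged region is then built from $O(1/\epsilon)$ original regions of size $\leq r_0$, giving $|R'_a| = O(r_0/\epsilon) = O(d_H/\epsilon^5)$, which is $O(r)$ after setting the $r$-division parameter in the outer algorithm to $r := d_H/\epsilon^5$. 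Making the averaging argument fully rigorous—i.e.\ showing that surplus can always be routed along paths of length $O(1/\epsilon)$ rather than having deficient regions cluster in a ``surplus desert''—is the one technical point that will require care, but the abundance of global surplus makes this plausible.
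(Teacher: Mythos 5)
Your counting step (showing the total surplus $\sum_a (L_a-O_a) = \Omega(\epsilon k)$ while the number of regions is $O(\epsilon^2 k)$) matches the paper exactly, but from that point on your route diverges and leaves a genuine gap. The paper never reasons about spatial adjacency of regions: it adds dummy $\opt$-centres to each $R_a$ until the \emph{total} surplus is zero, observes that there are enough dummies to put at least one in every region, applies the balancing step from the proof of Theorem~4 in \cite{FRS16B} (a purely combinatorial grouping of regions into parts, each with balance exactly zero, using $O(r^2)$ regions per part), and then discards the dummies. Since each $\overline R_a$ contributed at least one dummy, dropping them makes each merged region's surplus strictly positive, which is precisely the third conclusion. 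Nothing in this argument depends on which regions touch which.

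Your BFS/token-routing plan does depend on adjacency, and this is exactly where the argument breaks. The fact that the \emph{average} surplus per region is $\Omega(1/\epsilon)$ does not control the \emph{distribution} of surplus in the region adjacency graph $H_R$: the surplus could all sit in one cluster of regions while a long chain of deficient regions sits far away, giving routing paths of length $\Theta(\text{number of regions})$ rather than $O(1/\epsilon)$. You flag this yourself as the technical point ``that will require care,'' but it is not a detail to be filled in later --- it is the crux, and I don't see how bounded average degree of $H_R$ resolves it. A second, smaller issue: because the $R_a$'s can overlap, merging a deficient region with an adjacent surplus region does not cleanly give $L(R'_a)-O(R'_a) \geq (L_1-O_1)+(L_2-O_2)$; $\opt$-centres shared between the two regions are not double-counted in $O(R'_a)$ while interior $\loc$-centres may also collapse, so the sign of the merged balance needs a separate argument. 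The paper's dummy-centre trick sidesteps both difficulties entirely, which is why it is the cleaner route. If you want to keep a merging-based proof, drop the adjacency requirement and merge regions arbitrarily (as in Algorithm~2 / the pairing of $\pi^+$ and $\pi^-$), then use the dummy trick to get strictness; that recovers the paper's argument.
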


The rest of the proof proceeds as with our analysis of \kmedo in doubling metrics. For each $j \in \cX^o$ let $\tau(j)$ be any index such that $\sigma^*(j) \in R'_a$ and for each $j \in \cX^{o^*}$ let $\tau^*(j)$ be any index
such that $\sigma(j) \in R_a$. For each $R'_b$, define the imbalance of outliers $\Delta_b$ to be $|\{j \in \cX^o : \tau(j) = b\}| - |\{j \in \cX^{o^*} : \tau^*(j) = b\}|$. Note $\sum_{b=1}^\ell \Delta_b = 0$.

Everything now proceeds as before so we only briefly summarize: we find groups $G_s$ each consisting of $\Theta(r^2/\epsilon')$ regions of the form $R'_b$ along with similar a similar bijection $\kappa : \cX^o \rightarrow \cX^{o^*}$
such that for each $j \in \cX^o$ we have $R'_{\tau(j)}$ and $R'_{\tau^*(\kappa(j))}$ appearing in a common group. Finally, at most two regions are split by this grouping.
At this point, we determine that swaps of size $\rho' = O(r^3/\epsilon') = O(d^3_H / {\epsilon'}^7)$ in Algorithm \ref{alg2:local} suffice for the analysis.

For each group $G_s$, consider the test swap that opens all global optimum centres lying in some region $R'_a$ used to form $G_s$ and closes all local optimum centres in $G_s$ that are neither in the boundary $\partial(R'_a)$ 
of any region forming $G_s$ or in a split region. Outliers are reassigned exactly as with \kmedo, and points in $\cX^a \cap \cX^{a^*}$ are moved as they would be in the analysis for \ufl described above.

For each $j \in \cX^a \cap \cX^{a^*}$, the analysis above shows the total cost change for $j$ can be bounded by $c^*_j - c_j$. Similarly, for each $j \in \cX^o$ we bound the total
cost change of both $j$ and $\kappa(j)$ together by $c^*_j - c_{\kappa(j)}$. So in fact $\cost(\loc) \leq \cost(\opt)$.


\section{General Metrics}\label{sec:general}
Here we prove Theorem \ref{theo:general}. That is,
we show how to apply our framework for \kmedo and \kmeanso to the local search analysis in \cite{GT08} for \kmed and \kmeans
in general metrics where no assumptions are made about the distance function $\del$ apart from the metric properties. The algorithm and the redirections of the clients are the same with both \kmedo and \kmeanso.
We describe the analysis and summarize the different bounds at the end to handle outliers.

Let $\epsilon > 0$ be a constant and suppose Algorithm \ref{alg2:local} is run using solutions $\loc$ with $|\loc| \leq (1+\epsilon) \cdot k$
and neighbourhood size $\rho'$ for some large constant $\rho' > 0$ to be determined.
We use the same notation as before and, as always, assume $\loc \cap \opt = \emptyset$ and $\cX^o \cap \cX^{o^*} = \emptyset$.

Let $\phi : \opt \rightarrow \loc$ map each $i^* \in \opt$ to its nearest centre in $\loc$. Using a trivial adaptation of Algorithm 1 in \cite{GT08} (the only difference being we have $|\loc| = (1+\epsilon) \cdot k$
rather than $|\loc| = k$), we find {\em blocks} $B_1, \ldots, B_m \subseteq \loc \cup \opt$ with the following properties.
\begin{itemize}
\item The blocks are disjoint and each $i^* \in \opt$ lies in some block.
\item For each block $B_a$, $|B_a \cap \loc| = |B_a \cap \opt|$.
\item For each block $B_a$, there is exactly one $i \in B_a \cap \loc$ with $\phi^{-1}(i) \neq \emptyset$. For this $i$ we have $B_a \cap \opt = \phi^{-1}(i)$.
\end{itemize}
Call a block {\bf small} if $|B_a \cap \loc| \leq 2/\epsilon$, otherwise it is {\bf large}. Note there are at most $\frac{\epsilon}{2} \cdot k$ large blocks. On the other hand, there are $\epsilon \cdot k$ centres in $\loc$
that do not appear in any block. Assign one such unused centre to each large block, note these centres $i$ satisfy $\phi^{-1}(i) = \emptyset$ and there are still at least $\frac{\epsilon}{2} \cdot k$ centres in $\loc$ not appearing in any block.

Now we create parts $P_s$. For each large block $B_a$, consider any paring between $B_a \cap \opt$ and
$\{i \in B_a \cap \loc : \phi^{-1}(i) = \emptyset\}$. Each pair forms a part on its own. Finally, each small block is a part on its own. This is illustrated in Figure \ref{fig:general}.

\begin{figure}[ht]
\begin{center}
\includegraphics[scale=0.53]{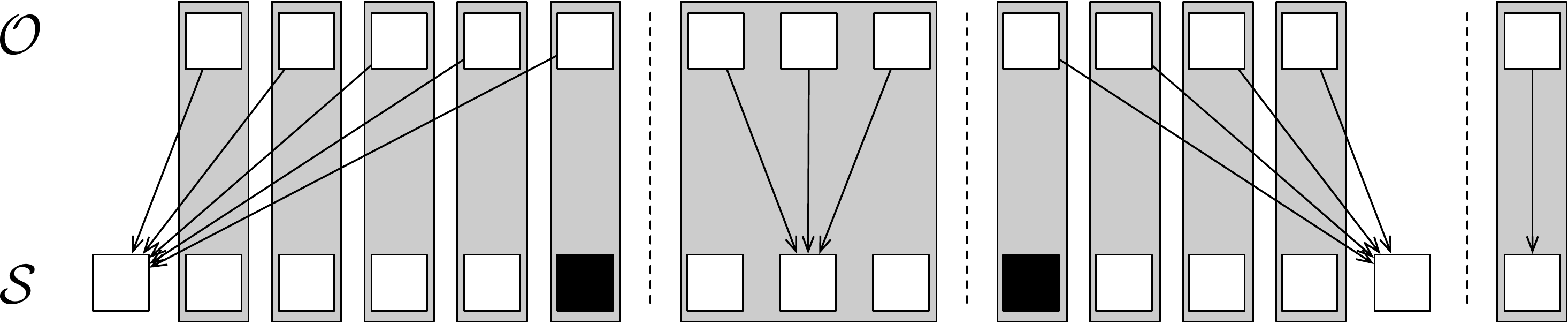}
\end{center}
\caption{Depiction of the formation of the parts $P_s$ where the arrows between them indicate the mapping $\phi$.
The white nodes are in the initial blocks and the blocks are separated by white dashed lines. The first and third block are large, so another centre (black) not yet in any block is added to them.
The grey rectangles are the parts that are formed from the blocks.}\label{fig:general}
\end{figure}

Note each part $P_s$ satisfies $|P_s \cap \loc| = |P_s \cap \opt| \leq 2/\epsilon$. Then perform the following procedure: while there are at least two parts with size at most $2/\epsilon$, merge them into a larger part.
If there is one remaining part with size at most $2/\epsilon$, then merge it with any part created so far. Now all parts have size between $2/\epsilon$ and $6/\epsilon$, call these groups $P'_1, \ldots, P'_\ell$.
Note $\ell \leq \frac{\epsilon}{2} \cdot k$ because the sets $P'_a \cap \opt$ partition $\opt$. To each part, add one more $i \in \loc$ which does not yet appear in a part. Summarizing,
the parts $P'_1, \ldots, P'_\ell$ have the following properties.
\begin{itemize}
\item Each $i^* \in \opt$ appears in precisely one part. Each $i \in \loc$ appears in at most one part.
\item $|P'_a \cap \opt| < |P'_a \cap \loc| \leq \frac{6}{\epsilon} + 1 \leq \frac{7}{\epsilon}$.
\end{itemize}

Construct a pairing $\kappa : \cX^o \rightarrow \cX^{o^*}$ almost as before. First, pair up outliers within a group arbitrarily. Then arbitrarily pair up unpaired $j \in \cX^o$ with points $j' \in \cX^{o^*}$ such that $\sigma(j)$
does not appear in any part $P'_a$. Finally, pair up the remaining unpaired outliers using Algorithm \ref{alg:pairing} applied to these parts. Form groups $G_s$ with these parts in the same way as before. Again, note some $i \in \loc$
do not appear in any group, this is not important. What is important is that each $i^* \in \opt$ appears in some group.

The swaps used in the analysis are of the following form. For each group $G_s$ we swap in all global optimum centres appearing in $G_s$ and swap out all local optimum centres appearing in $G_s$ that are not in a part split by $G_s$.
As each group is the union of $\Theta(1/\epsilon)$ parts, the number of centres swapped is $O(\epsilon^{-2})$. This determines $\rho' := O(\epsilon^{-2})$ in Algorithm \ref{alg2:local} for the case of general metrics.

The clients are reassigned as follows. Note by construction that if $j \in \cX^{o^*}$ has $\sigma(j)$ in a part that is not split then $\sigma^*(\kappa^{-1}(j))$ lies in the same group as $\sigma(j)$. Say that this is the group when $\kappa^{-1}(j)$ should be connected.
Otherwise, pick any group containing $\sigma^*(\kappa^{-1}(j))$ and say this is when $\kappa^{-1}(j)$ should be connected.

For each group $G_s$,
\begin{itemize}
\item For each $j \in \cX^o$, if $j$ should be connected in this group then connect $j$ to $\sigma^*(j)$ and make $\kappa(j)$ an outlier for a cost change of $c^*_j - c_{\kappa(j)}$.
\item For any $j \in \cX^a \cap \cX^{a^*}$ where $\sigma(j)$ is closed, move $j$ as follows:
\begin{itemize}
\item If $\sigma^*(j)$ is now open, move $j$ to $\sigma^*(j)$ for a cost change bound of $c^*_j - c_j$.
\item Otherwise, move $j$ to $\phi(\sigma^*(j))$ which is guaranteed to be open by how we constructed the parts. The cost change here is summarized below for the different cases of \kmed and \kmeans.
\end{itemize}
\end{itemize}
Finally, for any $j \in \cX^a \cap \cX^{a^*}$ that has not had its $c^*_j - c_j$ bound generated yet, move $j$ to $\sigma^*(j)$ in any one swap where $\sigma^*(j)$ is opened to get a bound of $c^*_j - c_j$ on its movement cost.

The analysis of the cost changes follows directly from the analysis in \cite{GT08}, so we simply summarize.
\begin{itemize}
\item For \kmedo, $\cost(\loc) \leq (3+O(\epsilon)) \cdot OPT$.
\item For \kmeanso, $\cost(\loc) \leq (25 + O(\epsilon)) \cdot OPT$.
\item For $k$-clustering with $\ell_q^q$-norms of the distances, $\cost(\loc) \leq ((3 + O(\epsilon)) \cdot q)^q$.
\end{itemize}
These are slightly different than the values reported in \cite{GT08} because they are considering the $\ell_q$ norm whereas we are considering $\ell_q^q$.
This concludes the proof of Theorem \ref{theo:general}.



\section{The Locality Gap}\label{sec:gap}

In this section, we show that the natural local search heuristic has unbounded gap for \uflo (with non-uniform opening costs).
We also show that local search multiswaps that do not violate the number of clusters and outliers can 
have arbitrarily large locality gap for \kmedo and \kmeans, even in the Euclidean metrics. 
Locality gap here refers to the ratio of any local optimum solution produced by the local 
search heuristics to the global optimum solution. 


\subsection{ \uflo with Non-uniform Opening Costs}

First, we consider a multi-swap local search heuristic for the \uflo problem with non-uniform centre opening costs. We 
show this for any local search that does $\rho$-swaps and does not discard more than $z$ 
points as outliers, where $\rho$ is a constant and $z$ is part of the input has unbounded ratio.
Assume the set of $\cX \cup \cl$ is partitioned into disjoint sets $A, \, B_1, \, B_2, \, \ldots, \, B_z$, where:
\begin{itemize}
	\item The points in different sets are at a large distance from one another.
	\item $A$ has one centre $i$ with the cost of $\rho$ and $z$ points which are colocated at $i$.
	\item For each of $\ell = 1, \, 2, \, \ldots, \, z$, the set $B_\ell$ has one centre $i_\ell$ with the cost of $1$ and one point located at $i_\ell$.
\end{itemize}

The set of centres $\calS = \{i_1, \, i_2, \, \ldots, \, i_z\}$ is a local optimum for the $\rho$-swap local search heuristic; any $\rho$-swap between $B_\ell$'s will not reduce the cost, and any $\rho$-swaps that 
opens $i$ will incur an opening cost of $\rho$ which is already as expensive as the potential savings from closing $\rho$ of the $B_\ell$'s. Note that the $z$ points of $A$ are discarded as outliers in $\calS$. 
It is straightforward to verify that the global optimum in this case is $\opt = \{ i \}$, which discards the points in $B_1, \, \ldots, \, B_z$'s as outliers. The locality gap for this instance is $\cost(\calS) / \cost(\opt)
= \dfrac{z}{\rho}$, which can be arbitrarily large for any fixed $\rho$.

Note this can also be viewed as a planar metric, showing local search has an unbounded locality gap for \uflo in planar graphs.

\subsection{\kmedo and \kmeanso }

Chen presents in \cite{Chen07} a bad gap example for local search for \kmedo in general metrics. 
The example shows an unbounded locality gap for the multiswap local search heuristic that does not violate the 
number of clusters and outliers. We adapt this example to Euclidean metrics. The same example shows standard multiswap local search for \kmeanso has an unbounded locality gap.

Consider an input in which $n  \gg z \gg  k > 1$. The set of points $\cX$ is partitioned into disjoint sets $B$, $C_1, \, C_2, \, \ldots, \, C_{k - 1}$, $D_1, \, D_2, \, \ldots, \, D_{k - 2}$, and $E$:
\begin{itemize}
	\item The distance between every pair of points from different sets is large.
	\item $B$ has $n - 2z$ colocated points.
	\item For each $i = 1, \, 2, \, \ldots, \, k - 1$, $C_i$ has one point in the centre and $u - 1$ points 
	evenly distributed on the perimeter of a circle with radius $\beta$ from the centre.
	\item For each $j = 1, \, 2, \, \ldots, \, k - 2$, $D_j$ has $u - 1$ colocated points.
	\item $E$ has one point at the centre and $u + k - 3$ points evenly distributed on the perimeter of the circle with 
        radius $\gamma$,
\end{itemize}
where $u = z / (k - 1)$, and $\beta$ and $\gamma$ are chosen such that $\gamma < (u - 1) \beta <  2\gamma$ (see Figure \ref{fig:kmedian_locality}). 
\begin{figure}[ht]
\begin{center}
\includegraphics[scale=0.53]{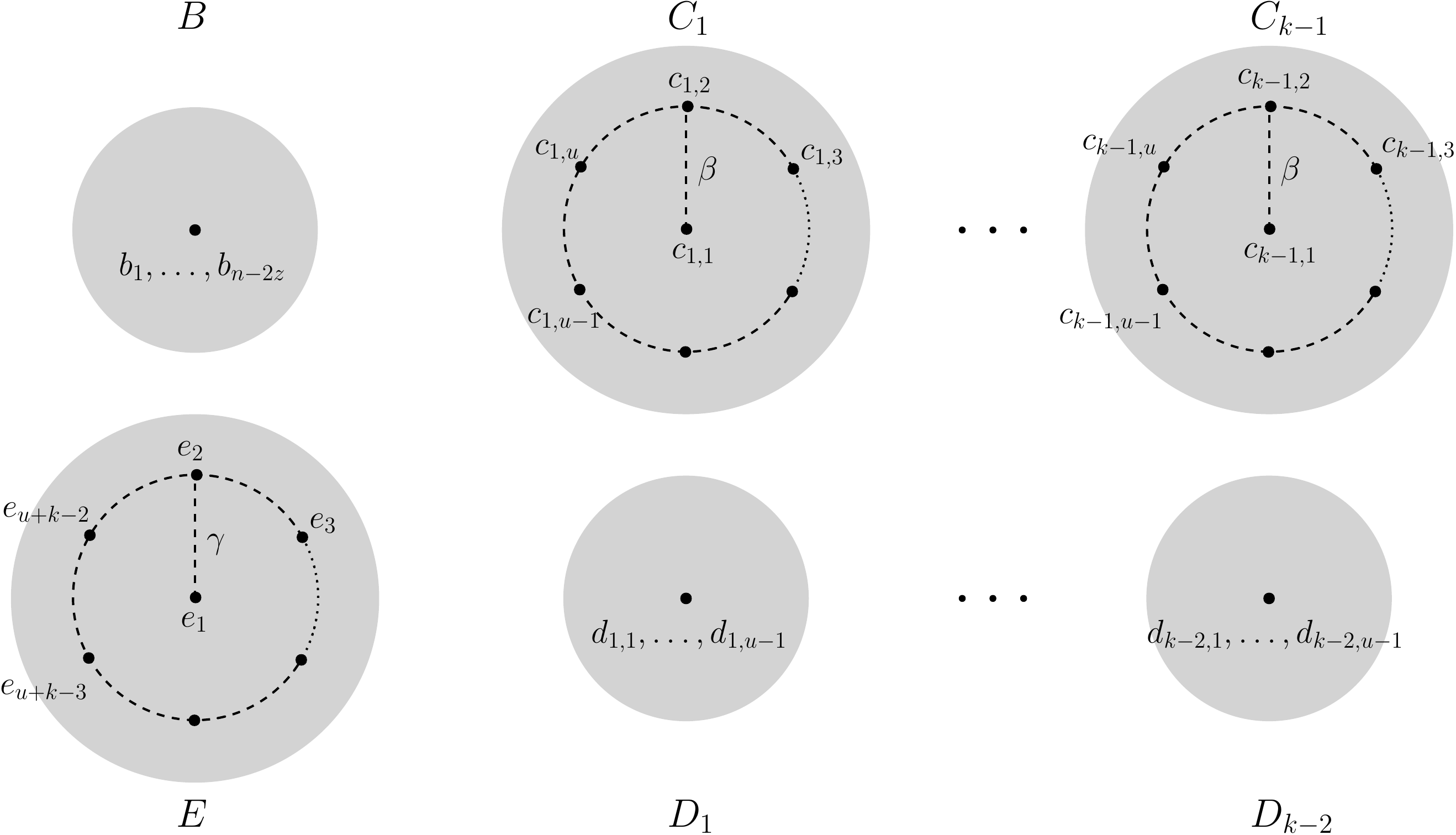}
\end{center}
\caption{The locality gap counter example for \kmed in Euclidean metrics.}\label{fig:kmedian_locality}
\end{figure}
Let $f(.)$ denote the centre point of a set (in the case of colocated sets, any point from the set). Then, the set 
$\calS = \{f(B), \, f(D_1), \, f(D_2), \, \ldots, \, f(D_{k - 2}), \, f(E) \}$ is a local optimum for the 
$\rho$-swap local search if $\rho < k - 1$ with cost $(u+k-3)\gamma$. 
The reason is that since the distance between the sets is large, we would incur a large cost by closing 
$f(B)$ or $f(E)$. Therefore, we need to close some points in the sets $D_1, \, \ldots, \, D_{k - 2}$, and
 open some points in $C_1, \, \ldots, \, C_{k - 1}$ to ensure we do not violate the number of outliers.
 Since $z \gg k$, we can assume $u > k - 1$. We can show via some straightforward algebra that if we
 close $\rho$ points from $D_j$'s, then we need to open points from exactly $\rho$ different $C_i$'s to
 keep the number of outliers below $z$. Since the points on the perimeter are distributed evenly, we
 incur the minimum cost by opening $f(C_i)$'s. So, we only need to show that swapping at most $\rho$ 
 points from $f(D_j)$'s with $\rho$ points in $f(C_i)$'s does not reduce the cost. Assume w.l.o.g. that 
 we swap $f(D_1), \, f(D_2), \, \ldots, \, f(D_\rho)$ with $f(C_1), \, f(C_2), \, \ldots, \, f(C_\rho)$. The
 new cost is $\rho (u - 1) \beta + (u + k - \rho - 3) \gamma$. Notice that $\cost(\calS) = (u + k - 3) 
 \gamma$. Therefore, the cost has, in fact, increased as a result of the $\rho$-swap since $(u - 1)
 \beta > \gamma$. Now, we can show the following claim for \kmedo:
 \begin{claim}
 The $\rho$-swap local search heuristic that generates no more than $k$ clusters and discards at most $z$ outliers has an unbounded locality gap in Euclidean metrics, if $\rho < k - 1$.
 \end{claim}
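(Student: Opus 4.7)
The plan is to identify a much cheaper global optimum and compute the ratio $\cost(\calS)/\cost(\opt)$. I would take
\[
\opt \;=\; \{\,f(B),\, f(C_1),\, f(C_2),\, \ldots,\, f(C_{k-1})\,\},
\]
which uses exactly $k$ centres, and designate every point in $D_1 \cup \cdots \cup D_{k-2}$ together with all $u+k-2$ points of $E$ as outliers. A direct count gives $(k-2)(u-1) + (u+k-2) = (k-1)u = z$ outliers, so the outlier budget is met exactly.

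Under this solution, $B$'s $n-2z$ points are at zero distance from $f(B)$, each $C_i$ contributes $(u-1)\beta$ from its perimeter points, and everything else is an outlier, giving $\cost(\opt) = (k-1)(u-1)\beta$. Combined with the already-established $\cost(\calS) = (u+k-3)\gamma$ and the hypothesis $\gamma > (u-1)\beta/2$, I would conclude
\[
\frac{\cost(\calS)}{\cost(\opt)} \;=\; \frac{(u+k-3)\gamma}{(k-1)(u-1)\beta} \;>\; \frac{u+k-3}{2(k-1)},
\]
which diverges as $u = z/(k-1) \to \infty$ with $k$ and $\rho < k-1$ fixed, proving the claim for \kmedo.

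For \kmeanso the same configuration works after retuning $\beta,\gamma$ so that $\gamma^2 < (u-1)\beta^2 < 2\gamma^2$ (for example, $\gamma = \beta\sqrt{3(u-1)/4}$). Local optimality of $\calS$ under squared distances still holds because a $\rho$-swap of $f(D_j)$'s for $f(C_i)$'s changes the cost by $\rho((u-1)\beta^2 - \gamma^2) > 0$, and swaps that touch $f(B)$ or $f(E)$ create enormous inter-set assignment costs that far exceed any possible savings. The squared ratio $(u+k-3)\gamma^2/((k-1)(u-1)\beta^2)$ is again unbounded in $u$. The main (minor) subtlety is justifying that no alternative assignment beats the proposed $\opt$: because inter-set distances are chosen arbitrarily large, any solution that omits one of $f(B), f(C_1), \ldots, f(C_{k-1})$ from its open centres must either pay an enormous inter-set cost or exceed the outlier cap, so the choice above is essentially forced up to the symmetry between the $C_i$'s. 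No further obstacle arises beyond this sanity check.
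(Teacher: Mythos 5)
Your proof follows the same route as the paper: exhibit $\opt = \{f(B), f(C_1), \ldots, f(C_{k-1})\}$, verify it uses exactly $z$ outliers, compute $\cost(\opt) = (k-1)(u-1)\beta$, and use $(u-1)\beta < 2\gamma$ to show $\cost(\calS)/\cost(\opt)$ diverges as $u = z/(k-1) \to \infty$, with the same mild retuning of $\beta,\gamma$ for \kmeanso. Your ratio $(u+k-3)\gamma/((k-1)(u-1)\beta) > (u+k-3)/(2(k-1))$ is in fact the correct form, consistent with $\cost(\calS) = (u+k-3)\gamma$; the paper's displayed bound $(u-k+3)\gamma/((k-1)(u-1)\beta)$ contains a sign typo, which is immaterial since both diverge.
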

 
 Consider the solution $\opt = \{ f(B), \, f(C_1), \, f(C_2), \, \ldots, \, f(C_{k - 1}) \}$ which costs $(k - 1)
 (u - 1) \beta$. This is indeed the global optimum for the given instance. The locality gap then would be 
 \[\dfrac{(u - k + 3) \gamma}{(k - 1)(u - 1) \beta} > \dfrac{u - k + 2}{2(k - 1)},\]
since $(u - 1) \beta < 2 \gamma$. This ratio can be arbitrarily large as $z$ (and consequently $u$) grows.

A slight modification of this example to planar metrics also shows local search has an unbounded locality gap for \kmedo and \kmeanso in planar graphs. In particular, the sets of collocated points
can be viewed as stars with 0-cost edges and the sets $E$ and all $C_i$ can be viewed as stars where the leaves are in $\cX$ and have distance $\gamma$ or $\beta$ (respectively) to the middle of the star, which lies in $\fa$.


\bibliographystyle{plain}
{
\bibliography{k-means-outliers}
}

\appendix

\section{Missing proofs for Section \ref{sec:uflo}}\label{app:missingProofs}
\begin{pproof}{Lemma \ref{lem:size}}
The upper bound is simply because the number of parts used to form $G_s$ is at most $2 |E_s| \leq 4\alpha$ and each part has at most $2\rho$ centres.
For the lower bound, we argue the graph $(\pi^+ \cup \pi^-, \cP)$ is acyclic. If so, then $(G_s , E_s)$ is also acyclic
and the result follows because any acyclic graph $(V,E)$ has $|V| \geq |E|-1$ (and also the fact that $|P| \geq 1$ for any part).

To show $(\pi^+ \cup \pi^-, \cP)$ is acyclic, we first remove all nodes (and incident edges) with degree 1. Call this graph $H$. 
After this, the maximum degree of a vertex/part in $H$ is at least 2. To see this, note for any $P \in \pi^+ \cup \pi^-$,
the incident edges are indexed consecutively by how the algorithm constructed super edges. 
If $e_i, e_{i+1}, \ldots, e_j$ denotes these super edges incident to $P$, then, again by how the algorithm constructed super edges,
for any $i < c < j$ that the other endpoint of $e_c$ only appeared in one iteration so the corresponding part has degree 1.

Finally, consider any simple path in $H$ with at least 4 vertices starting in $\pi^+$ and ending in $\pi^-$. Let
$P^+_{a_1}, P^-_{b_1}, P^+_{a_2}, P^-_{b_2}, \ldots, P^+_{a_c}, P^-_{b_c}$ be, in order, the parts visited by the path.
Then both the $a_i$ and $b_i$ sequences are strictly increasing, or strictly decreasing. As $c \geq 2$
(because there are at least 4 vertices on the path) then either i) $a_1 < a_c$ and $b_1 < b_c$ or ii) $a_1 > a_c$ and $b_1 > b_c$. 
Suppose, without loss of generality, it is the former case.
There is no edge of the form $(P^+_{a_1}, P^-_{b_c})$ in $\cP$ because the only other edge $(P^+_a, P^-_{b_c})$ incident to 
$P^-_{b_c}$ besides $(P^+_{a_c}, P^-_{b_c})$ has $a > a_c$.

Ultimately, this shows there is no cycle in $H$. As $H$ is obtained by removing only the degree-1 vertices of 
$(\pi^+ \cup \pi^-, \cP)$, then this graph is also acyclic.
\end{pproof}

\begin{pproof}{Lemma \ref{lem:split}}
Say $E_s = \{e_i, e_{i+1}, \ldots, e_j\}$. First consider some index $i < c < j$ where $e_c$ has a split endpoint $P$. Then some edge $e_{c'}$ incident to $P$ is not in $E_s$
so either $c' < i$ or $c' > j$. Suppose $c' < i$, the other case $c' > j$ is similar.
By construction, all edges in $N_{\cP}(P)$ for any $P \in \pi^+\cup \pi^-$ appear consecutively. As $c' < i < c$ and $e_{c'}, e_c \in N_{\cP}(P)$, then $e_i \in N_{\cP}(P)$.
Thus, the only split parts are endpoints of either $e_i$ or $e_j$.

We claim that if both endpoints of $e_i$ are split, then $e_i$ and $e_j$ share an endpoint and the other endpoint of $e_j$ is not split.
Say $e_{i-1} = (P^+_{a_{i-1}}, P^-_{b_{i-1}})$. Then either $P^+_{a_{i-1}} \neq P^+_{a_i}$ or $P^-_{b_{i-1}} \neq P^-_{b_i}$. Suppose it is the former case, the latter again being similar.
So if $P^+_{a_i}$ is split it must be $e_{j+1}$ has $P^+_{a_i}$ as an endpoint. Therefore, $e_i$ and $e_j$ share $P^+_{a_i}$ as an endpoint. Further, since $e_{j+1}$ has $P^+_{a_i}$ as an endpoint
then $N_{\cP}(P^-_{b_j}) = \{e_j\}$ so $P^-_{b_j}$, the other endpoint of $e_j$, has degree 1 so is not split.

Similarly, if both endpoints of $e_j$ are split then one is in common with $e_i$ and the other endpoint of $e_i$ is not split. Overall we see $G_s$ splits at most two vertices.
\end{pproof}

\section{Proof of Theorem \ref{theo:newpartition}}\label{app:newpartition}

Here we show how we can modify proof of Theorem 4 in \cite{FRS16B} to prove Theorem \ref{theo:newpartition}.
The proof of Theorem 4 in \cite{FRS16B} starts by showing the existence of a randomized partitioning scheme of 
$\loc\cup\opt$ where each part has size at most $(d/\epsilon)^{\Theta(d/\epsilon)}$ satisfying the 2nd and 3rd condition.
We can ensure that each part has size at least $1/\epsilon$ by merging small parts if needed.
That part of the proof remains unchanged. Let us call the parts generated so far $P_1,\ldots,P_\ell$.
Then we have to show how we can combine constant number of parts to satisfy
condition 1 for some $\rho=\rho(\epsilon,d)$. Since we have $(1+\epsilon)k$ centres in $\loc$ and $k$ centres in $\opt$
we can simply add one ``dummy'' optimum centre to each part $P_i$ so that each part has now one dummy centre, 
noting that  $\ell< \epsilon k$ (because each part has size at least $1/\epsilon$). 
We then perform the balancing step of proof of Theorem 4 in \cite{FRS16B} to obtain parts of size 
$\rho=(d/\epsilon)^{\Theta(d/\epsilon)}$ with each part having the same number of centres from $\loc$ and $\opt$
satisfying conditions 2 and 3. Removing the ``dummy'' centres, we satisfy condition 1 of Theorem \ref{theo:newpartition}.


\section{Reassigning Points for the \kmedo Analysis} \label{app:kmed}
Below we specify what to do for each point $j\in\cX^o$ and $\kappa(j)$ together.

\begin{itemize}
\item If $j\in\cX^o$ and $s=s(j)$, then by Lemma \ref{lem:outliers} $\sigma^*(j)\in G_s$ and it is open, we
assign $j$ to $\sigma^*(j)$ and make $\kappa(j)$ an outlier. The total assignment cost change for
$j$ and $\kappa(j)$ will be $c^*_j-c_{\kappa(j)}$.

\end{itemize}

The subsequent cases are when $j\in\cX^{a}\cap\cX^{a^*}$.

\begin{itemize}
\item If $j$ is lucky and $s=s(j)$ then we reassign $j$ from $\sigma(j)$ to $\sigma^*(j)$. The total
reassignment cost change is $c^*_j-c_j$.

\item If $j$ is long then we assign $j$ to the nearest open center to $\sigma(j)$. Again using Lemma
\ref{lem:cents} and since $j$ is long, the total cost change of the reassignment is at most:

\[ 5 \cdot D_{\sigma(j)} \leq 5 \epsilon \cdot \delta(\sigma(j), \sigma^*(j)) \leq 5\epsilon \cdot (c^*_j + c_j). \]

\item If $j$ is good and $D_{\sigma^*(j)} \leq \epsilon \cdot D_{\sigma(j)}$, we assign $j$ to the open centre that 
is nearest to $\sigma^*(j)$. By \eqref{eqn:radius} in Section \ref{sec:recall} and Lemma \ref{lem:cents},
the assignment cost change for $j$ is at most:
\begin{alignat*}{2}
c^*_j + 5 \cdot D_{\sigma^*(j)} - c_j \quad
& \leq \quad c^*_j + 5 \epsilon \cdot D_{\sigma(j)} - c_j \\
& \leq \quad c^*_j + 5\epsilon\cdot (c^*_j + c_j)  - c_j \\
& \leq \quad (1+5\epsilon) \cdot c^*_j - (1-5\epsilon)\cdot c_j.
\end{alignat*}
\item If $j$ is good but $D_{\sigma^*(j)} > \epsilon \cdot D_{\sigma(j)}$, then let $i'$ be such that $\sigma(j),i'$ both 
lie in $G_s$ and $\delta(\sigma^*(j), i') \leq \epsilon \cdot D_{\sigma^*(j)}$. Reassigning $j$ from $\sigma(j)$ to $i'$
bounds its assignment cost change by
\[ c^*_j + \delta(\sigma^*(j), i')  -c_j \leq c^*_j + \epsilon \cdot D_{\sigma^*(j)} - c_j 
\leq (1+\epsilon) \cdot c^*_j - (1-\epsilon) \cdot c_j. \]
\item Finally, if $j$ is bad then simply reassign $j$ to the open centre that is nearest to $\sigma(j)$. 
By \eqref{eqn:radius} and Lemma \ref{lem:cents}, the total reassignment cost change for $j$ is at most:
\[ 5 \cdot D_{\sigma(j)} \leq 5 \cdot (c^*_j + c_j). \]
\end{itemize}

For every point $j$ for which $\sigma(j)$ is still open we keep it assigned to $\sigma(j)$.
Considering all cases, if $\Delta_j$ denotes the net cost change for re-assignment of $j\in\cX$, then:

\begin{itemize}
\item If $j \in \cX^{o}$ then the only time $j$ is moved is for the swap involving $G_{s(j)}$ and $\Delta_j = c^*_j$.
\item If $j \in \cX^{o^*}$ then the only time $j$ is moved is for the swap involving $G_{s(\kappa^{-1}(j))}$. So $\Delta_j = -c_j$.
\item If $j$ is lucky then it is only moved when $G_{s(j)}$ is processed so $\Delta_j = c^*_j - c_j$.
\item If $j$ is long then it is moved to $\sigma^*(j)$ when $G_{s(j)}$ is processed and it is moved near $\sigma(j)$ when $\sigma(j)$ is closed, so
\[ \Delta_j \leq c^*_j - c_j + 5\epsilon \cdot (c^*_j + c_j) = (1+5\epsilon) \cdot c^*_j - (1-5\epsilon) \cdot c_j. \]
\item If $j$ is good then it is only moved when $\sigma(j)$ is closed so $\Delta_j \leq (1+5\epsilon) \cdot c^*_j - (1-5\epsilon) \cdot c_j$.
\item If $j$ is bad then it is only moved when $\sigma(j)$ is closed so $\Delta_j \leq 5 \cdot (c^*_j + c_j)$.
\end{itemize}

Considering all $G_s\in\cG$ (and considering all test-swaps):

\begin{eqnarray}
0 & \leq& \quad \sum_{G_s \in \cG} \cost((\loc - \loc_s) \cup \opt_s) - \cost(\loc) \nonumber\\
 &\leq& \sum_{j \in \cX} \Delta_j \nonumber\\
& \leq & \sum_{\begin{array}{cc}
j \in \cX^{a} \cap \cX^{a^*} \\
j\text{ is not bad}
\end{array}} \left[(1+5\epsilon) \cdot c^*_j - (1-5\epsilon) c_j\right]
+ \sum_{j\text{ bad}}5(c^*_j + c_j) + \sum_{j \in \cX^o} c^*_j - \sum_{j \in \cX^{o^*}} c_j .\label{eqn:finalDelta}
\end{eqnarray}

Using the fact that the probability of a point $j$ being bad is at most $\epsilon$ we get:

\[ 0 \leq \sum_{j \in \cX} \left[(1+10\epsilon) \cdot c^*_j - (1-10\epsilon) \cdot c_j\right]. \]
Rearranging and relaxing slightly further shows the same bound $\cost(\loc) \leq (1+30\epsilon) \cdot OPT.$


\section{Proof of Lemma \ref{lem:newpartition2}}\label{app:newpartition2}
This is similar to the proof in Appendix \ref{app:newpartition}. The main difference in this setting is that the sets $R_a$ are not necessarily disjoint but are only guaranteed have limited overlap.

First, note
\begin{alignat}{2}
\sum_{a=1}^m  |(R_a - \partial(R_a) \cap \loc| - |R_a \cap \opt| \quad
& \geq [|\loc| - \epsilon \cdot (|\loc| + |\opt|)] -  \quad [|\opt| + \epsilon\cdot (|\loc| + \opt|)] \nonumber \\
& = \quad |\loc| - |\opt| -  2\epsilon\cdot (|\loc| + |\opt|) \nonumber \\
& = \quad \epsilon'\cdot k - 2\epsilon \cdot (2+\epsilon') \cdot k \nonumber \\
& \geq \quad \frac{\epsilon'}{2} \cdot k \label{eqn:minorbound}
\end{alignat}
because $\sum_{a=1}^m |\partial(R_a)| \leq \epsilon \cdot |\loc \cup \opt|$.

Now add ``dummy'' optimum centres to each region to form regions $\overline{R_a}$ satisfying
$\sum_{a=1}^m |\overline R_a \cap \opt| - |(\overline R_a - \partial(\overline R_a) \cap \loc| = 0$
where the boundary $\partial(\overline R_a)$ is the boundary of the non-dummy vertices. The number to be added overall is at bounded as follows,
\begin{alignat*}{2}
\sum_{a=1}^m  |(R_a - \partial(R_a) \cap \loc| - |R_a \cap \opt| \quad
& \leq \quad |\loc| - |\opt| - \epsilon \cdot (|\loc| + \opt|) \\
& \leq \quad \epsilon' k - 2\epsilon k \\
& = \quad \frac{4\epsilon'}{5} \cdot k.
\end{alignat*}
Now, as $|R_a| \leq r$ for each $a$ then there are at least $2k/r$ regions, so we may add at most $\frac{4\epsilon}{5} / \frac{2}{r} = O(d_H / \epsilon)$ centres per region to achieve this.
That is, $|\overline R_a| \leq |R_a| + O(d_H/\epsilon) = O(r)$.

On the other hand, we can guarantee each $\overline R_a$ has at least one dummy centre. this is because there are at least $\frac{\eps'}{2} \cdot k$ dummy centres to add and the number of regions
is at most $c_H/r \cdot 3k = \frac{3c_H}{d_H}\epsilon^2 \cdot k = \frac{300 c_H}{d_H} {\epsilon'}^2 \cdot k$. For small enough $\epsilon'$, the number of centres to add is at least the number of regions.

Finally, using Theorem 4 in \cite{FRS16B}, we can partition $\{\overline R_1, \ldots, \overline R_a\}$ into parts $\mathcal{R_1}, \ldots, \mathcal R_{\ell}$
where each part $\mathcal R_b$ consists of $O(r^2)$ regions and $\sum_{\overline R_a \in \mathcal R_b} |(\overline R_a - \partial(\overline R_a) \cap \loc| - |\overline R_a \cap \opt| = 0$.
For each $1 \leq b \leq \ell$, let $R'_b = \cup_{\overline R_a \in \mathcal R_b} R_a$ (discarding the dummies). Note $N(R'_b) \subseteq \cup_{\overline R_a \in \mathcal R_b} N(\overline R_a)$.
So $|R'_b \cap \opt| < |(R'_b - N(R'_b)) \cap \loc|$, as required.

\end{document}